\newtheorem{lemma}{Lemma}
\newtheorem{theorem}{Theorem}
\newtheorem{proposition}{Proposition}
\newtheorem{definition}{Definition}
\newtheorem{assumption}{Assumption}
\newcommand{\diam}{\mathrm{diam}}
\newcommand{\dist}{\mathrm{dist}}
\newcommand{\Ker}{\mathrm{Ker}}
\newcommand{\caH}{\mathcal{H}}
\newcommand{\caG}{\mathcal{G}}
\newcommand{\caA}{\mathcal{A}}
\newcommand{\caB}{\mathcal{B}}
\newcommand{\caM}{\mathcal{M}}
\newcommand{\superadd}{\mathrm{S}}
\newcommand{\bbN}{\mathbb{N}}
\newcommand{\bbR}{\mathbb{R}}
\newcommand{\bbC}{\mathbb{C}}
\newcommand{\aux}{^{\mathrm{aux}}}
\newcommand{\spec}{{\mathrm{spec}}}
\newcommand{\ident}{\mathds{1} }
\newcommand{\bbE}{\mathbb{E}}
\title{Stability of invertible, frustration-free ground states against large perturbations}
\author[1]{Sven Bachmann}
\author[2]{Wojciech De Roeck}
\author[3]{Brecht Donvil}
\author[4]{Martin Fraas}
\affil[1]{Department of Mathematics, University of British Columbia, Vancouver, BC V6T 1Z2, Canada}
\affil[2] {Institute of Theoretical Physics, K.U. Leuven, 3001 Leuven, Belgium }
\affil[3]{Institute for Complex Quantum Systems and Center for IQST, Ulm University, 89069 Ulm, Germany}
\affil[3]{Department of Mathematics and Statistics, University of Helsinki, Helsinki, Finland}
\affil[4]{Department of Mathematics, University of California, Davis, Davis, CA, 95616, USA}
\begin{document}
\maketitle
\abstract{A gapped ground state of a quantum spin system has a natural length scale set by the gap. This length scale governs the decay of correlations.  A common intuition is that this length scale also controls the spatial relaxation towards the ground state away from impurities or boundaries. The aim of this article is to take a step towards a proof of this intuition. We assume that the ground state is frustration-free and invertible, i.e.\ it has no long-range entanglement. Moreover, we assume the property that we are aiming to prove for one specific kind of boundary condition; namely open boundary conditions. This assumption is also known as the "local topological quantum order" (LTQO) condition.   With these assumptions we can prove stretched exponential decay away from  boundaries or impurities, for any of the ground states of the perturbed system. In contrast to most earlier results, we do not assume that the perturbations at the boundary or the impurity are small. In particular, the perturbed system itself can have long-range entanglement. 
}
\section{Informal statement of the result}\label{sec: informal statement on stability}
Since a full ab initio statement of our assumptions and theorem requires quite some setup and definitions, we first state some simplified assumptions and the result, freely using terminology that is probably known to most of our readers.
We consider a spin system on a finite discrete set $\Gamma$, say, a subset of $\mathbb{Z}^d$. We define a  Hamiltonian $H+J$ where both terms $H$ and $J$ are local Hamiltonians, i.e.\ sums of local terms $H=\sum_{X \subset \Gamma} h_X, J=\sum_{X \subset \Gamma} j_X$ with $||h_X||,||j_X||$ decaying rapidly in $\diam(X)$.  We assume moreover the following properties: 
\begin{enumerate}
\item The spatial support of $J$ is confined to  a region $\Gamma_j$  with arbitrary size.  Crucially, we do not assume that the terms $j_X$  are small. 
\item The Hamiltonian $H$ has a ground state $\Omega$  that is \emph{invertible}. This means there is an auxiliary state $\Omega'$ such that $\Omega\otimes \Omega'$ is connected to a product state by a locally generated unitary, i.e.\ it is automorphically equivalent to a product state.  
\item $H$ is frustration-free, i.e. the local terms $h_X$ are all minimized by the state $\Omega$. 
\item The open boundary restrictions (OBC) $H_Z=\sum_{X\subset Z} h_Z$ for balls $Z$ with radius $r$ have a spectral gap $\gamma(r)$ above the ground state sector, that decays no faster than an inverse polynomial in $r$, as $r\to\infty$.
\item The ground states of the open boundary restrictions $H_Z$ for balls $Z$ satisfy the so-called `local topological quantum order' (LTQO) condition: Their local restrictions to a set $X$ approach the local restriction of the state $\Omega$, quasi-exponentially fast, as  $\dist(X,Z^c)\to\infty$. 
\end{enumerate}

The result is that local restrictions of any ground state of $H+J$ approach the local restriction of $\Omega$ fast, as a function of the distance to the impurity or boundary region $\Gamma_j$, see also Figure \ref{fig: main}.

\vspace{3mm}
\textbf{Informal claim on local stability:}
 \emph{There is a stretching exponent $\beta>0$ and finite constant~$C$ such that, for any ground state $\Phi$ of $H+J$, and for any local observable $O_X$ supported in $X\subset \Gamma$ with $\diam(X) \leq \dist(X,\Gamma_j)^{\beta}$, 
$$
|\langle \Phi, O_X \Phi \rangle -  \langle \Omega, O_X \Omega \rangle| \leq  C ||O_X ||  e^{-(\dist(X,\Gamma_j))^{\beta}}
$$}
\vspace{2mm}

\begin{figure}
\centering
\includegraphics[scale=1]{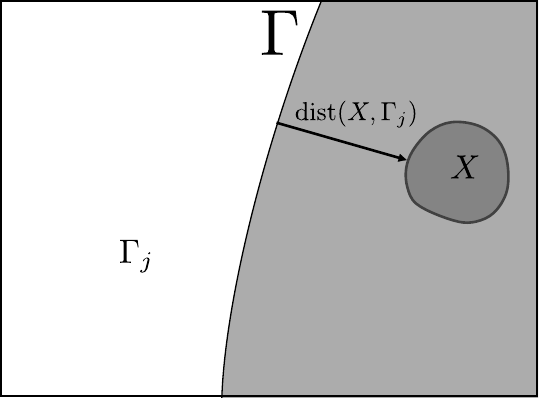}
\caption{The informal claim on local stability concerns ground state expectation values of observables supported in the region $X$, far away from $\Gamma_j$, the support of $J$.}
\label{fig: main}
\end{figure}
As far as we can see, there is no natural sense in which $\Phi$ resembles $\Omega$, other than what is expressed in this claim. In particular, $\Phi$ is not necessarily the unique ground state, it has in general no gap\footnote{Actually, our assumptions do not explicitly require a global gap for $H$ either, but such a gap is there in spirit because of the invertibility assumption,  and we have left it in the abstract to fix thoughts.}, no short range entanglement properties and it is not frustration-free.  To illustrate this, in particular the possibility of long-range entanglement, we consider $\Gamma=\{1,2,\ldots,L\}$ and we imagine that at each site there is a qubit, i.e.\ a two-dimensional Hilbert space, with base states $\ket{\downarrow},\ket{\uparrow}$. The Hamiltonian is 
$H=\sum_{i\in\Gamma} \sigma^z_i$ where $\sigma^z_i$ acts at site $i$, as $\sigma^z_i\ket{\uparrow}_i=\ket{\uparrow}_i$ and $\sigma^z_i\ket{\downarrow}_i=-\ket{\downarrow}_i$.
This Hamiltonian has a unique ground state that is a product, namely $\otimes_i\ket{\downarrow}_i$ .
The perturbation $J$ is chosen as $J= -\sigma^z_1-\sigma^z_L+e^{-cL}j_{1,L}$ with some $c>0$ and $j_{1,L}$ a two-qubit operator acting on sites $1$ and $L$. Because of the very small prefactor $e^{-cL}$, this perturbation $J$ satisfies any reasonable locality requirement.  By taking $j_{1,L}=0$, the perturbed system $H+J$ has 4-fold ground state degenerary.  By taking $j_{1,L}$ to be minus the rank-1 projector on the entangled Bell state $ \tfrac{1}{\sqrt{2}} \left( \ket{\uparrow}_1 \otimes \ket{\uparrow}_L+ \ket{\downarrow}_1 \otimes \ket{\downarrow}_L\right)$, the perturbed system  $H+J$ has a unique ground state that has maximal entanglement between qubits $1$ and $L$.

\subsection{Discussion}
 
 \subsubsection{Spectral stability of the ground state sector}\label{sec: spectral stability}
The above result should be contrasted with recent work on the stability of the spectral gap against locally small perturbations. That recent work can be split in two classes. The first class concerns systems where the unperturbed ground state is a product, see \cite{de2017exponentially,del2020lie,froehlich2020lie,yarotsky2006ground,datta1996low,borgs1996low,lapa2021stability}. The second class is based on the so-called  Bravyi-Hastings-Michalakis (BHM) argument which applies to frustration-free ground states, see  \cite{bravyi2010topological,michalakis2013stability,nachtergaele2020quasi,nachtergaele2021stability}.  Both classes are eventually based on some form of spectral perturbation theory applied to the ground state sector.
In all of the above cases, one also obtains some form of locality for the action of small perturbations, captured by the slogan `local perturbations perturb locally', see also \cite{bachmann2012automorphic,de2015local}. 

  Our result is different in spirit, as our perturbations are not assumed to be small, but we obtain only information about the perturbed ground states in regions far away from the region $\Gamma_j$ where the perturbation acts.  On the other hand, and in contrast to the quoted works, our result relies crucially on the variational principle and so it holds only for ground states (or ceiling states).  One should realize that it is not only the proof of the results in \cite{bravyi2010topological,michalakis2013stability,nachtergaele2020quasi} that does not apply in our case, but also the results should not be expected, as already stressed above through an example.

\subsubsection{Role of the invertibility assumption}
If one drops the invertibility assumption, then the claim on local stability in Section \ref{sec: informal statement on stability} can no longer hold.
To see this, one can consider the two-dimensional toric code model \cite{kitaev2006anyons,kitaev2009topological} on a large square $\Gamma$.  This model is frustration-free and with appropriate boundary conditions, there is a unique groundstate  $\Omega$.  The OBC restrictions always have a spectral gap that is bounded below by a positive constant,  and the topological order condition holds.  However, the state $\Omega$ is not invertible, regardless of boundary conditions.  The lack of invertibility is associated with anyonic excitations that are created in pairs.
By choosing a boundary condition that fixes a single anyon excitation at the boundary of a region $\Gamma$, one forces a partner anyon to be present somewhere in the interior of $\Gamma$, see Figure \ref{fig:Toric}. In this case, there will be multiple ground states, labelled by the position $x$ of the partner anyon. Since $x$ can be arbitrarily far from the boundary, the claim on local stability does not hold. 
\begin{figure}
\centering
\includegraphics[scale=1]{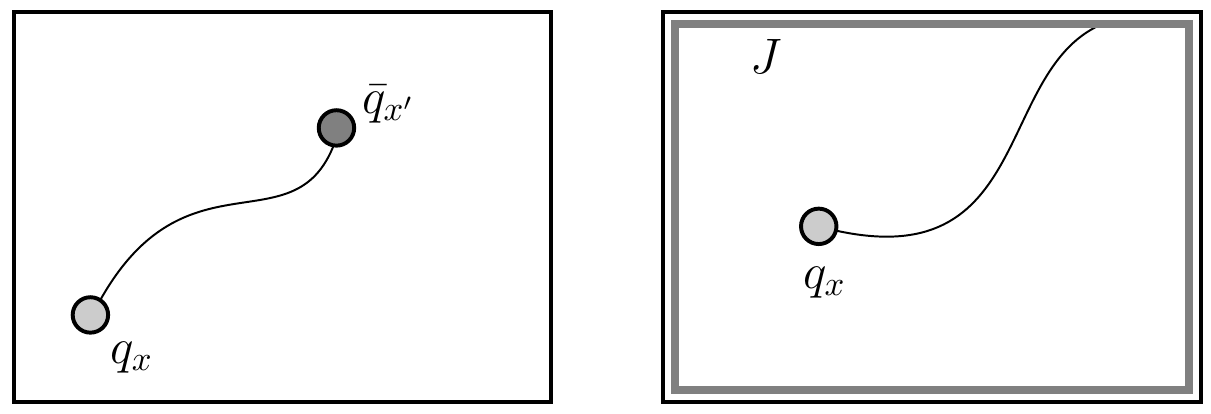}
\caption{Left: a pair of anyons at sites $x,x'$ connected by a (locally non-observable) string. This is an excited state of the toric code Hamiltonian. Right: one of the anyons is fixed at the boundary by the boundary condition. This constrains the partner anyon to be present in the ground state, at some arbitrary site $x$.}
\label{fig:Toric}
\end{figure}
Of course, the large degeneracy of ground states is not a robust feature of the model. Indeed, if one adds a small global perturbation to the model, which will in general give the hitherto static anyons some mobility (see e.g.\ \cite{nachtergaele2020dispersive}), then the degeneracy will be lifted and the unique ground state will correspond to a delocalized anyon.  In this case, the claim on local stability is still violated because the ground state now differs locally from $\Omega $ by an amount that is an inverse polynomial in the volume $|\Gamma|$.

\subsubsection{Remaining assumptions, extensions, and related results}
As argued above, the invertibility assumption can not be dropped. We can also not drop the LTQO condition. Consider for example $H=H_0+B$ acting on the chain of qubits (as the example immediately below the informal claim). We take  $H_0$ the classical Ising Hamiltonian $H_0=-\sum_{i=1}^{L-1}\sigma_i^z \sigma_{i+1}^z$ such that its ground states exhibits spontaneous symmetry breaking, with ground states $\otimes_i\ket{\downarrow}_i$ and $\otimes_i\ket{\uparrow}_i$. The term $B=\sigma^z_1$ breaks the symmetry at the boundary, so that $H$ has a unique ground state $\Omega=\otimes_i\ket{\downarrow}_i$.  In that case our result does not hold, as we see by taking $J=-B$. The LTQO assumption fails to hold since in the interior of $\Gamma$, the symmetry remains unbroken and the OBC restrictions in the interior also have $\otimes_i\ket{\uparrow}_i$ as ground states.  Of course, one could imagine replacing the LTQO condition by some form of translation invariance of the Hamiltonian $H$, ruling out the above example.  

 Apart from this, we do not see that the other assumptions are fundamentally necessary, though they are necessary for our proof.  However, a mild extension of our result seems within reach: we hope that techniques introduced within the BHM argument, see Section \ref{sec: spectral stability}, could allow to generalize our result in the following way: If we were to tighten our assumptions so that the local gap $\gamma(r)\geq \gamma>0$, i.e.\ the local gap is uniformly bounded from below, then the result should also hold for Hamiltonians of the type $H'=H+\epsilon K$ where $H$ and its ground state satisfy the tightened assumptions, $K$ is a Hamiltonian with the same locality restrictions as $H$ and $J$, and $\epsilon \ll 1$ is a small parameter.  
 
Results similar to ours have recently been obtained independently in \cite{henheik2021local}, as a corollary to  \cite{yarotsky2006ground}.
The result of \cite{henheik2021local,yarotsky2006ground} yields full exponential decay and requires no explicit frustration freeness, but it is restricted to weakly interacting spins, i.e.\ perturbations of products.  Instead, our result relies on the automorphic equivalence 
of the unperturbed ground state to a product state, possibly upon adjoining an auxiliary state. Both results need hence an underlying product structure.   Another line of research that is loosely connected to ours, concerns stability at nonzero temperature in spatial dimension 1, see \cite{hastings2007quantum,kato2019quantum}.

\subsection*{Acknowledgements}
W.D.R thanks Marius Sch{\"u}tz and  Simone Warzel for collaboration and discussions of many years ago that proved crucial for the present paper.  We thank two anonymous referees for very constructive remarks that allowed us to substantially improve the paper, in particular concerning the necessity of the assumptions. 
 M.F. was supported in part by the NSF under grant DMS-1907435. W.D.R. was supported in part by the FWO under grant G098919N. S.B. was upported by NSERC of Canada.

\section{Setup}
\subsection{Preliminaries}\label{sec: prelim}
\subsubsection{Spatial structure}\label{sec: spatial structure}
We consider a finite graph $\Gamma$, equipped with the graph distance $\dist(\cdot,\cdot)$. Let
 $B_r(x)=\{y:  \dist(x,y) \leq r \}$ be the ball of radius $r$ centered at $x\in\Gamma$. The graph is assumed to have a finite dimension $d <\infty$, i.e.\ there is a $C_\Gamma<\infty$ such that
$$
\sup_{x\in \Gamma}|B_r(x)| \leq  1+C_{\Gamma} r^{d}.
$$ 
For reasons of recognizability, we refer to vertices as `sites'. 
To every site $x$  is associated a finite-dimensional Hilbert space $\caH_x$ and we define the total Hilbert space $\caH=\caH_{\Gamma}=\otimes_{x\in\Gamma} \caH_x$ and the algebra $\mathcal{A}=\mathcal{A}_\Gamma= \mathcal B(\caH)$ of bounded operators on $\mathcal{H}$. For any $X\subset\Gamma$, we have the Hilbert space $\caH_X=\otimes_{x\in X} \caH_x$ and the algebra $\caA_X =\caB(\caH_X)$.
We use the usual embedding $\caA_X \to \caA$ given by $O_X \to O_X\otimes \ident_{X^c}$.  As is customary, we will identify $O_X\otimes \ident_{X^c}$ with $O_X$ and say that $O_X$ is supported in $X$. 

\subsubsection{Locality}\label{sec: locality}
We write $\bbN^+$ for the strictly positive naturals and we let $\caM$ be a class of functions $m: \bbN^+ \to \bbR^+$ of quasi-exponential decay. The class $\caM$ is defined by the following two conditions.
\begin{enumerate}
\item $m$ is non-increasing. 
\item For every $0< \alpha < 1 $, there exists $C_\alpha, c_\alpha >0$ such that $m(r) \leq C_\alpha e^{-c_\alpha r^\alpha}$. 
\end{enumerate}
Note that the same class of functions is obtained by setting $c_\alpha=1$, which is used often in the proofs. 
To express the locality properties of Hamiltonians, we consider collections $q$ of local terms $(q_X)_{X \subset \Gamma}, q_X \in \caA_X$, sometimes called 'interactions', and we endow them with a family of norms, parametrized by $m \in \caM$:
$$
||q ||_{m}:= \sup_{x\in\Gamma} \sum_{X\ni x} \frac{ ||q_X||}{m(1+\diam(X))}
$$
The locality property is then expressed by the finiteness of $||q ||_{m}$ for some $m\in\caM$.
\subsubsection{Trace norms}\label{sec: trace norms}
We denote by $\tr^{(X)}$ the trace on the Hilbert space $\caH_X$, and we abbreviate $\tr=\tr^{(\Gamma)}$, i.e.\ the trace  on the global Hilbert space $\caH$.
We recall the trace norm $||\cdot ||_{1,X}$ on $\caA_X=\caB(\caH_X)$, defined by 
$$
|| O ||_{1,X} =\tr^{(X)} \sqrt{OO^*},\qquad   O \in \caA_X 
$$ 
We then denote by $\tr_X$ the partial trace $\tr_X: \caA \mapsto \caA_{X_c}$, satisfying $\tr^{(X)} \tr_{X^c} O=\tr O$ for any $O \in \caA$.
Since we will often use trace norms of operators resulting from the partial trace we introduce a short-hand notation:
$$
|O|_X : =   || \tr_{X^c} O ||_{1,X}.
$$
Nonnegative operators $\rho$ on $\caH$ that have unit trace $\tr \rho=1$, are called density matrices and we write $\langle O\rangle_\rho=\tr( \rho O)$ for $O\in\caA$.  For $\Psi\in \caH$, we write $\rho_\Psi$ for the pure density matrix equal to the orthogonal projector on  $\bbC\Psi$. 
\subsection{Spaces and Hamiltonian}

The Hamiltonian $H$ and the perturbation $J$ are written as
\begin{equation}
H=\sum_{X \subset \Gamma} h_X,  \qquad  J=\sum_{X \subset \Gamma} j_X  \qquad h_X=h^*_X \in \caA_X,\quad  j_X=j^*_X \in \caA_X
\end{equation}
The perturbation $J$ is spatially restricted to a region $\Gamma_j \subset \Gamma$ in a mild sense
$$
j_X=0  \quad \text{unless}  \quad   X \cap \Gamma_j \neq \emptyset
$$
Then, we assume, for the collections $h,j$, 
$$
||h||_{m_h},  ||j||_{m_j}  <\infty
$$
for some $m_h,m_j \in \caM$.

\subsection{Assumptions}

Our first assumption states that the Hamiltonian $H$ has a frustration free ground state $\Omega$, i.e.\ $\Omega$ is an eigenvector of $H$ with eigenvalue equal to $\min(\mathrm{spec}(H))$  ($\spec(\cdot)$ is the spectrum). Our conventions are such that this eigenvalue is $0$. 
\begin{assumption}[Frustration-free ground state]\label{ass: ff}
All the terms $h_X$ are nonnegative: $h_X \geq 0$.  
There is a $\Omega \in \caH, ||\Omega||=1$ such that 
  $\Omega \in \Ker(h_X)$ for any $X$. We denote by $\mu=\rho_\Omega$ the corresponding density matrix.
\end{assumption}
We note that the frustration free property depends on the way the Hamiltonian is written as a sum of local terms, i.e.\ on the interaction.
Next, we introduce `open boundary condition' (OBC) restrictions of $H$, 
$$
H_Z=\mathop{\sum}\limits_{X\subset Z} h_X.
$$
Just as the frustration free property, the notion of OBC restriction depends on the interaction. 
We let $P_Z$ be the orthogonal projector on $\Ker (H_Z)$. Since  $h_X\geq 0$, it holds that  
$$\Ker (H_Z)=\mathop{\cap}\limits_{X\subset Z} \Ker(h_X)$$ and hence also
$$
P_Z P_{Z'} =  P_{Z'},\qquad   Z \subset Z'.
$$
The following assumption  has come to be known as `Local Topological Quantum Order' (LTQO) but, in our case, it is better described as the property that the density matrix of any ground state of the OBC Hamiltonian $H_Z$ looks similar to the global ground state $\mu$ in the deep interior of $Z$.  Recall that $B_r(x) \subset \Gamma$ is the ball or radius $r$ centered at $x$.
\begin{assumption}[OBC-regularity]\label{ass: LTO}
There is $m_{O} \in \caM$ and $d_O <\infty$ such that, for any $x\in\Gamma$ and $\Psi \in \Ker(H_{B_r(x)})$, 
$$
| \rho_\Psi-\mu|_{B_{r-k}(x)} \leq  r^{d_O} m_{O}(k),\qquad k<r 
$$ 
\end{assumption}
Let us briefly comment on the precise form of the above bound. We have in mind, roughly, that any $\Psi \in \Ker(H_{B_r(x)})$ differs from $\Omega$ only through the presence of boundary modes. One realization of this would  be that any such $\Psi$ is of the form $e^{iF}\Omega$ with $F$ a sum of terms supported near the boundary of $B_r(x)$. This would indeed lead to the bound above with ${d_O}$ chosen such that $  |\partial B_r(x)| \leq Cr^{d_O}$.

The next assumption concerns the local gap of OBC Hamiltonians. Let 
$$
\gamma(Z)= \min (\spec(H_Z) \setminus \{0\})
$$
be the spectral gap of $H_Z$.  
\begin{assumption}[Local Gap]\label{ass: local gap}
There are $C_\gamma,d_{\gamma} <\infty$ such that, for any $x\in\Gamma$
$$
 \frac{1}{\gamma(B_r(x))}  \leq C_\gamma r^{d_\gamma}.
$$
\end{assumption}
This assumption \ref{ass: local gap} might be misleading. In most examples of models with gapped frustration-free ground states, that we are aware of, OBC restrictions $H_Z=\sum_{X\subset Z}h_Z$ have a gap that is actually bounded below in $Z$, and the physical edge modes are found as ground states of $H_Z$, i.e.\ the eigenvalue at $0$ will typically be degenerate. In that case, the genuine restriction is not so much assumption \ref{ass: local gap} but rather assumption \ref{ass: LTO} which describes the kernel $\Ker(H_{B_r(x)})$.  

\subsubsection{Invertibility}
Let $\caH'$ be $\caH'=\caH'_\Gamma = \otimes_{x\in\Gamma}\caH'_x$ with $ \caH'_x$  finite-dimensional Hilbert spaces.  We will now consider the tensor product $\caH \otimes  \caH'$ which we denote by
$$
\widetilde \caH= \caH \otimes^{\mathrm{aux}}  \caH'.
$$
The superscript $\otimes^{\mathrm{aux}} $ reminds us of the fact that this is not a tensor product between disjoint spatial regions, but between the `original' Hilbert space and an `auxiliary' Hilbert space.
This is helpful because we also view $\widetilde \caH$ again as a tensor product over sites
$$
\widetilde \caH= \otimes_{x \in \Gamma}  \widetilde \caH_x,\qquad   \widetilde \caH_x = \caH_x \otimes\aux \caH'_x
$$

\begin{figure}
\centering
\includegraphics[scale=1]{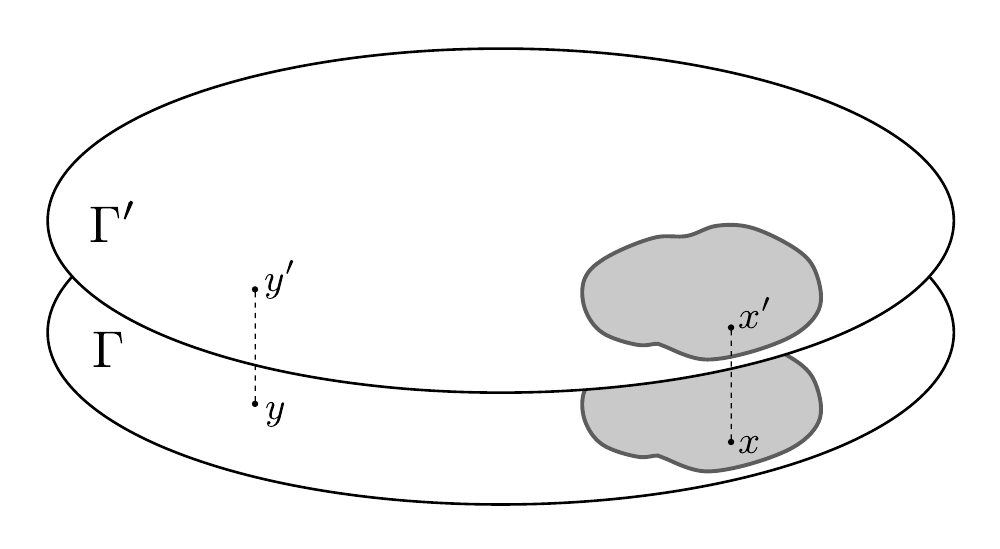}
\caption{The invertibility assumption uses an auxiliary Hilbert space $\caH'$ that has the same spatial structure as the original space $\caH$, i.e.\ $\Gamma'$ is a copy of $\Gamma$.}
\label{fig:invertibility}
\end{figure}

The local structure of the space $\widetilde \caH$ is analogous to the one of $\caH$, except that now the dimension of the on-site spaces is larger, see Figure \ref{fig:invertibility}. We consistently put a prime on algebras derived from $\caH'$, $\caA'=\caB(\caH')$ and we put a tilde on algebras derived from $\widetilde\caH$, i.e.\ $\widetilde \caA =\mathcal{B}(\widetilde \caH)$ and $\caA_X' \subset \caA',\widetilde \caA_X \subset \widetilde \caA$ are subalgebras of operators supported in $X$.  We also copy the definition of collections $q$ of local operators $(q_X)_{X}$ and their norms $||\cdot ||_m$, simply replacing $\caA_X$ by $\widetilde \caA_X$.    The next assumption expresses that the state $\Omega$ has no \emph{long-range entanglement}. Loosely speaking, one could describe this assumption as `upon adjoining an auxiliary state $\Omega'$, $\Omega \otimes \Omega'$ is automorphically equivalent to a product state'  \cite{bachmann2012automorphic,hastings2005quasiadiabatic}.
\begin{assumption}[Invertibility]\label{ass: invertibility}
There are  collections $q(s)$ of local operators $(q_X(s))_{X\subset\Gamma}$, indexed by $s\in[0,1]$,  such that
\begin{enumerate}
\item $q_X(s)=q^*_X(s) \in\widetilde\caA_X$ and $s\mapsto q_X(s)$ is measurable, for any $X$. 
\item $\sup_{s\in[0,1]}  || q(s) ||_{m_I}  \leq C_I <\infty$ for some $m_I\in\caM$.
\item There is a product state $\Pi=\otimes_{x\in \Gamma}\Pi_x \in \widetilde \caH$ and a state $\Omega' \in\caH'$ such that
$$
\Omega \otimes\aux \Omega'=U(1) \Pi,\qquad    U(s)=\ident+i\int_0^s du \,  H_q(u) U(u)
$$ 
where $H_q(s)=\sum_{X \subset \Gamma}  q_X(s)$. 
\end{enumerate}
We say that $\Omega'$ is an `inverse' to $\Omega$. 
\end{assumption}
The `invertibility' assumption roughly corresponds to states that do not have anyonic excitations. Examples of invertible states are: symmetry protected topological states (SPT's) \cite{freed2014anomalies,KitaevConf,gu2009tensor}, states characterized by an integer quantum Hall effect \cite{kapustin2020hall}, etc..   We refer to the extensive literature for a more thorough discussion.  
The motivation for considering the type of spatial decay expressed by the class $\caM$ is because this corresponds to the decay one can prove for the spectral flow connecting ground states of a uniformly gapped Hamiltonian, by the technique of (quasi-)adiabatic continuation, see \cite{hastings2005quasiadiabatic,bachmann2012automorphic}.

\subsection{Result} \label{sec: result}

We say that a unit vector $\Phi \in \caH_\Gamma$ is a ground state of the perturbed Hamiltonian $H+J$ if $\Phi$ is an eigenvector of $H+J$ with eigenvalue equal to $\min{\mathrm{spec}(H+J)}$.
Recall that $\mu$ is the density matrix associated to $\Omega$, the distinghuished ground state of the unperturbed Hamiltonian $H$, and $\rho_\Phi$ is the density matrix associated to $\Phi$. 
By a `constant', we mean a quantity that can depend only on $m_h,m_j,m_I \in \caM$, the numbers $C_\gamma,C_\Gamma$ and  $d,d_O,d_\gamma$. In particular, constants can not depend on the sizes $|\Gamma|$ {and $|\Gamma_j|$}.
\begin{theorem}\label{thm: main}
For any $w>0$, there is a constant $C(w)$ such that, for
any ground state $\Phi$ of $H+J$, and any $x\in \Gamma$, with $R:=\dist(x,\Gamma_j)$,
$$
|\rho_{\Phi}-\mu|_{B_{(R^{p}/2)}(x)}   \leq C(w) e^{-R^{p-w}},\qquad  p= \frac{1}{d+d_\gamma+2}
$$
\end{theorem}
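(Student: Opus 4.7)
The plan is to combine the variational principle for $\Phi$ with the invertibility assumption to construct a trial state for $H+J$ that looks like $\Omega$ far from $\Gamma_j$ and is tailored near $\Gamma_j$ to absorb $J$. Pinching $\min\spec(H+J)$ between a lower bound coming from the frustration-free positivity of $H$ and an upper bound from the trial state will force $\langle \Phi, H_B \Phi \rangle$ to be small for suitable balls $B$ in the interior. Assumption \ref{ass: local gap} will then project $\Phi$ onto $\Ker(H_B)$ up to a small norm error, and Assumption \ref{ass: LTO} will compare the reduced density matrix of the projected vector to $\mu$.

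Fix $x\in\Gamma$ with $R = \dist(x,\Gamma_j)$, set $R_1 = R^p$ with $p = 1/(d+d_\gamma+2)$, and let $V = \{y\in\Gamma:\dist(y,\Gamma_j)<R-R_1\}$, so $\Gamma_j\subset V$ and $B_{R_1}(x)\subset V^c$ at distance $\geq R_1$ from $\partial V$. Work in the enlarged space $\widetilde\caH$; by Assumption \ref{ass: invertibility}, $\Omega\otimes\aux\Omega' = U(1)\Pi$ for a product state $\Pi = \otimes_{y\in\Gamma}\Pi_y$ and a unitary $U(1)$ whose generator $H_q(s)$ is quasi-local in class $\caM$ (with $\|q(s)\|_{m_I}\leq C_I$). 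Consider trial states
\[
\Psi_V \;=\; U(1)\bigl(\pi_V\otimes\Pi_{V^c}\bigr),
\]
where $\pi_V\in\widetilde\caH_V$ is chosen to nearly minimize $\langle \Psi_V, (H+J)\Psi_V\rangle$ (viewing $H,J$ as acting trivially on the auxiliary factor). Lieb--Robinson estimates for $H_q(s)$ guarantee that the reduced density matrix of $\Psi_V$ on any region $W\subset V^c$ at distance $\ell$ from $\partial V$ matches that of $\Omega\otimes\aux\Omega'$ up to a class-$\caM$ error in $\ell$.

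Write $H_V = \sum_{X\subset V}h_X$ and $\lambda_V = \min\spec(H_V + J)$; since $\Gamma_j\subset V$, the operator $H_V + J$ is supported on $V$. Frustration-free positivity $H - H_V\geq 0$ together with $\langle \Phi, (H_V+J)\Phi\rangle\geq\lambda_V$ yields
\[
\min\spec(H+J) \;=\; \langle \Phi, (H - H_V)\Phi\rangle + \langle \Phi, (H_V + J)\Phi\rangle \;\geq\; \lambda_V.
\]
The trial state provides the complementary bound $\min\spec(H+J) \leq \langle \Psi_V, (H+J)\Psi_V\rangle \leq \lambda_V + \epsilon$, where $\epsilon$ bundles the contribution of $h_X$ straddling $\partial V$ together with Lieb--Robinson tails from $U(1)$, both class-$\caM$-decaying in $R_1$ up to a polynomial prefactor in $R_1$ (from the surface area $|\partial V|\lesssim R_1^{d-1}$). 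Pinching gives $\langle \Phi, (H - H_V)\Phi\rangle \leq \epsilon$, and since $B_{R_1}(x)\subset V^c$ we have $H_{B_{R_1}(x)} \leq H - H_V$ by frustration-freeness, hence $\langle \Phi, H_{B_{R_1}(x)}\Phi\rangle \leq \epsilon$.

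Assumption \ref{ass: local gap} then forces $\|(1 - P_{B_{R_1}(x)})\Phi\|^2 \leq \epsilon\,C_\gamma R_1^{d_\gamma}$. Set $\Phi' = P_{B_{R_1}(x)}\Phi/\|P_{B_{R_1}(x)}\Phi\|\in\Ker(H_{B_{R_1}(x)})$. Assumption \ref{ass: LTO} with $r=R_1$, $k=R_1/2$ gives $|\rho_{\Phi'}-\mu|_{B_{R_1/2}(x)}\leq R_1^{d_O}\,m_O(R_1/2)$, and contractivity of the partial trace yields $|\rho_\Phi-\rho_{\Phi'}|_{B_{R_1/2}(x)}\leq 2\|\Phi-\Phi'\|$; the choice $R_1=R^p$ with $p=1/(d+d_\gamma+2)$ absorbs the polynomial factors $R_1^d$, $R_1^{d_\gamma}$, $R_1^{d_O}$ into the class-$\caM$ decay, yielding $C(w)\,e^{-R^{p-w}}$ for any $w>0$. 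The main obstacle lies in the trial state construction: since $U(1)$ is only quasi-local, $\Psi_V$ is not a genuine product across $\partial V$ and $\pi_V$ cannot literally be chosen as the ground state of a local operator on $V$. One must approximate $U(1)$ by a unitary strictly supported in an enlargement of $V$ (via Lieb--Robinson truncation), incur commutator errors with boundary-crossing $h_X$, and optimize $\pi_V$ against the resulting effective local Hamiltonian; carefully tracking class-$\caM$ tails through this double approximation, together with the polynomial prefactors above, is what ultimately fixes $p$ and the arbitrarily small loss $w$.
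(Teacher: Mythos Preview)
Your variational pinching has the right spirit, and the endgame (local gap plus LTQO to pass from small $\langle H_{B_{R_1}}\rangle_\sigma$ to the reduced-state bound) matches the paper. The gap is that the one-shot trial-state argument cannot give a decaying $\epsilon$. First, $|\partial V|\lesssim R_1^{d-1}$ is false: with $V=\{y:\dist(y,\Gamma_j)<R-R_1\}$, the set $\partial V$ is a shell at distance $R-R_1$ from $\Gamma_j$ whose cardinality scales with $|\Gamma_j|$, which the theorem explicitly forbids constants to depend on. More fundamentally, even after replacing $V$ by $\Gamma\setminus B_{R_1}(x)$ so that $|\partial V|$ is polynomial in $R_1$, the boundary contributions do not decay in $R_1$: for $h_X$ with $X$ within $O(1)$ of $\partial V$, the reduced state of $\Psi_V$ on $X$ has no reason to be close to $\mu$ (Lieb--Robinson only makes $\Psi_V$ look $\Omega$-like at distance $\ell$ from $\partial V$ with error $m(\ell)$, so at $\ell=O(1)$ the error is $O(1)$). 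Hence $\langle\Psi_V,(H-H_V)\Psi_V\rangle$ is of order $|\partial V|$ and your pinching yields only $\langle\Phi,H_{B_{R_1}}\Phi\rangle\leq C|\partial V|$, a polynomial bound. No choice of $\pi_V$ helps: forcing $\Psi_V$ to be $\Omega$-like near $\partial V$ prevents it from approaching $\lambda_V$ deep in $V$. Note also that if $\epsilon$ really were $m(R_1)$ times polynomials, any $p<1$ would work; the specific value $p=1/(d+d_\gamma+2)$ is not derivable from your scheme.

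The paper supplies the two ingredients you are missing. First, the comparison state is $\Sigma_Z(\sigma)$ built from $\sigma=\rho_\Phi$ itself, on balls $Z=B_r(x)$ (so $|\partial Z|$ never sees $|\Gamma_j|$), and LTQO is invoked already in the energy estimate (Proposition~\ref{lem: stitching}, Lemma~\ref{lem: error stitch}) to bound the stitching cost at $\partial Z$ by $\sigma$'s \emph{own} local energy there, yielding the recursive inequality $\langle H_{Z\setminus(\partial Z)_r}\rangle_\sigma\leq Cr^{d+d_\gamma}\langle H_{(\partial Z)_r}\rangle_\sigma+m(r)|\partial Z|$. Second, this inequality is iterated over $\sim R/r$ concentric balls via a Gr\"onwall argument (Lemma~\ref{lem: isoperimetry iterated}); the growth factor $(1+cr^{-(d+d_\gamma)})^{R/r}$ balanced against $m(r)$ by choosing $r=R^p$ is what produces the stretched exponential and fixes $p=1/(d+d_\gamma+2)$. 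Without an iterative mechanism your argument can at best recover the base of this recursion, not its conclusion.
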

To connect this theorem to the informal claim on stability in Section \ref{sec: informal statement on stability}, note that the left hand side is equal to $\sup_{O \in  \caA_X, ||O||=1}  |\langle \Phi, O \Phi \rangle -  \langle \Omega, O \Omega \rangle|  $ with $X=B_{(R^{p}/2)}(x)$.

\section{Proof of Theorem \ref{thm: main}}

We define some additional notation to be used in this section. 
For regions $Z \subset\Gamma$, we define the $r$ fattening
$$
(Z)_r :=\{ x\in\Gamma, \dist(x,Z) \leq r\}
$$
(we will simply write $(Z)_r=Z_r$ when no confusion is possible)
and the boundary 
$$
\partial Z= (Z)_1 \cap  (Z^c)_1
$$
We use constants $C,c$ as introduced just before Theorem \ref{thm: main}, i.e.\ depending on a number  of fixed parameters. Sometimes we write $C(a)$ to indicate that $C$ can additionally depend on a parameter $a$. We also use the generic notation $m$ for a function in $\caM$ that depends possibly on those same fixed parameters.  Just as the constants $C,c$, the precise function $m$ can change from line to line.  In the same vein, we also use $p$ to denote a polynomial that depends only on the fixed parameters. 
%

\subsection{Stitching maps}\label{sec: stitching maps}

Recall that $\mu$ is the global ground state density matrix of the unperturbed Hamiltonian $H$. 
We say that a family of maps $\Sigma_Z$, indexed by $Z\subset \Gamma$, are stitching maps if the they satisfy the following properties.

\begin{definition} \label{def: locality stitch}
Stitching maps $\Sigma_Z$ with $Z \subset \Gamma$ are trace-preserving completely positive maps $\caA \to \caA $ iff.\  there is an $m\in\caM$ such that, for any $Z,X \subset \Gamma$ and any density matrices $\rho,\omega$ on $\caH$,
\begin{enumerate}
\item  $|\Sigma_Z(\rho)-\tr_{Z^c} \mu \otimes \tr_{Z}\rho|_{X}  \leq |X|  m(\dist(\partial Z,X))$
\item $
|\Sigma_Z (\rho)- \Sigma_Z(\omega)|_{X}  \leq  |\rho- \omega|_{(X)_r} +|X| m(r) $
\item $
\Sigma_Z (\mu)= \mu $
\end{enumerate}
\end{definition}
To state this in a rough way, stitching maps are such that $
\Sigma_Z (\mu)= \mu $ and further
$$
\Sigma_Z(\rho) =\begin{cases}  \mu &   \text{deep inside $Z$} \\      \rho & \text{far outside $Z$} \end{cases}
$$
The stitching maps will be used for regions $Z$ that are far away from the perturbation region $\Gamma_j$.

The purpose of the invertibility assumption \ref{ass: invertibility} is precisely to ensure the existence of stitching maps, as we show now.
Recall that $\widetilde \caH$ is the enlarged Hilbert space.
\begin{proposition} \label{lem: stitching invertible}
For any region $Z\subset\Gamma$, there are unitaries $V=V_Z$ acting on $\widetilde \caH$ such that the following family of CP maps $\Sigma=\Sigma_Z$ are stitching maps in the sense  of definition \ref{def: locality stitch}:
\begin{equation}
\Sigma(\rho)= \tr_{\mathcal{H}'}\left[V^*\left(\tr_{Z^c} (V\mu\otimes\aux\mu'V^*)\otimes \tr_{Z} (V\rho\otimes\aux\mu'V^*)\right)V \right]
\end{equation}
where  $\mu=\rho_{\Omega}$ and  $\mu'=\rho_{\Omega'}$ are density matrices on $\caA$ and $\caA'$ and $\Omega' \in \caH'$ is an `inverse state' to $\Omega$, see assumption \ref{ass: invertibility}.
\end{proposition}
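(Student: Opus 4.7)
The plan is to set $V := U(1)^*$ with $U$ the time-ordered evolution from Assumption~\ref{ass: invertibility}, independently of $Z$, so that $\Pi := V(\mu \otimes\aux \mu')V^*$ is by construction a site-factorized product state $\otimes_x \Pi_x$ on $\widetilde{\caH}$. With this choice, $\Sigma_Z$ reads as: conjugate by $V$ to disentangle $\mu\otimes\aux\mu'$, cut and paste in the product-state frame, conjugate back, and trace out the auxiliary factor. Property (3) drops out immediately: since $\Pi$ factors over sites, $\tr_{Z^c}\Pi \otimes \tr_Z\Pi = \Pi$, hence $\Sigma_Z(\mu) = \tr_{\caH'}[V^*\Pi V] = \tr_{\caH'}(\mu\otimes\aux\mu') = \mu$.

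For (1) and (2), the workhorse is the standard Lieb--Robinson / quasi-adiabatic approximation for the evolution generated by $H_q(s)$: since $\|q(s)\|_{m_I}\leq C_I$ uniformly, there exists $m\in\caM$ (depending only on the fixed data) such that for every $\widetilde{O}\in\widetilde{\caA}_X$ one can find $\widetilde{O}^{(r)}\in\widetilde{\caA}_{(X)_r}$ with
$$\|V\widetilde{O}V^* - \widetilde{O}^{(r)}\| \leq |X|\, m(r)\, \|\widetilde{O}\|,$$
cf.~\cite{bachmann2012automorphic,hastings2005quasiadiabatic}. For property (1) I would pass to the Heisenberg picture,
$$\tr[\Sigma_Z(\rho)\,O] = \tr_{\widetilde{\caH}}\bigl[(\tr_{Z^c}\Pi \otimes \tr_Z(V\rho\otimes\aux\mu' V^*))\, V(O\otimes\ident')V^*\bigr],$$
and approximate $V(O\otimes\ident')V^*$ by $\widetilde{O}^{(s)}$ with $s=\lfloor\dist(X,\partial Z)/2\rfloor$. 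Then $\widetilde{O}^{(s)}$ lies entirely on one side of the cut $\partial Z$, so one of the two partial-trace factors contracts to unity (its trace is $1$, being a state), and the surviving expression simplifies either to $\tr_{\widetilde{\caH}}[\Pi\,\widetilde{O}^{(s)}]=\tr[\mu\, O]$ when $X\subset Z$, or to $\tr[\rho\, O]$ when $X\subset Z^c$. Both values match $\tr[(\tr_{Z^c}\mu\otimes\tr_Z\rho)\,O]$, yielding (1) with error $|X|m(s)$.

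For (2), the difference $\Sigma_Z(\rho)-\Sigma_Z(\omega)$ is linear in $\rho-\omega$ through the factor $\tr_Z(V(\rho-\omega)\otimes\aux\mu' V^*)$. Repeating the Heisenberg trick with $\widetilde{O}^{(r)}$ and contracting the $Z$-part of the reference factor against $\tr_{Z^c}\Pi$ produces an operator $B\in\widetilde{\caA}_{(X)_r\cap Z^c}$ with $\|B\|\leq 1$. The pairing then reduces to $\tr[(\rho\otimes\aux\mu')\, V^*(\ident_Z\otimes B)V]$; a second application of the Lieb--Robinson approximation localizes $V^*(\ident_Z\otimes B)V$ to $\widetilde{\caA}_{(X)_{2r}}$ (since $((X)_r\cap Z^c)_r\subset(X)_{2r}$), and tracing the auxiliary factor against $\mu'$ delivers $G\in\caA_{(X)_{2r}}$ with $\|G\|\leq 1$. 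The pairing is then bounded by $|\rho-\omega|_{(X)_{2r}} + C|X|m(r)$, which is (2) after renaming $2r\mapsto r$. The main obstacle I anticipate is purely the bookkeeping: ensuring each successive quasi-local approximation enlarges the relevant support by only $O(r)$, that the compounded error prefactors involving $|X|$ or $|(X)_r|$ can be absorbed into the quasi-exponential decay of a single $m\in\caM$, and that the two approximations compose in the correct order so that all errors close cleanly.
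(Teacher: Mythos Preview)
Your approach is correct and genuinely different from the paper's. The paper does \emph{not} take $V=U(1)^*$; instead it builds a $Z$-dependent unitary $V_Z=\hat U(1)U^*(1)$, where $\hat U$ is generated by the interaction $q$ with all terms straddling the cut $\partial Z$ removed. This $V_Z$ has two locality properties (Lemma~\ref{lem:splitting2}): the usual Lieb--Robinson spreading \emph{and} the stronger statement $\|V_Z^*OV_Z-O\|\leq |X|\,m(\dist(X,\partial Z))\,\|O\|$, i.e.\ $V_Z$ is close to the identity away from the stitch. Property~(3) then follows because $V_Z(\mu\otimes\aux\mu')V_Z^*$ factorizes across the cut $Z|Z^c$ (though not site-wise), and property~(1) is obtained by replacing the two conjugations by $V_Z,V_Z^*$ with the identity, paying the near-identity error. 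Property~(2) is deduced from an abstract ``almost locality preserving'' framework for the decomposition $\Sigma^*=\Upsilon_1\circ\cdots\circ\Upsilon_6$.

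Your construction bypasses all of this: with a single $Z$-independent $V$ you land directly on the site-wise product $\Pi$, so the cut-and-paste is exact in that frame, and you need only the one Lieb--Robinson estimate (applied twice per property). This is more economical. What the paper's route buys is the explicit near-identity property of $V_Z$ away from $\partial Z$, which makes the locality of the stitching map manifest at the level of the unitary itself rather than only after tracing against states; this is conceptually tidy but not needed elsewhere in the paper. Your anticipated bookkeeping concerns (compounding $|X|$ versus $|(X)_r|$ prefactors, renaming $2r\mapsto r$) are harmless since $r^d m(r)$ is again in $\caM$; the only case you should mention explicitly is when $X$ meets both $Z$ and $Z^c$, where $\dist(X,\partial Z)$ is small and the bound in property~(1) becomes trivial.
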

We first construct the appropriate unitaries $V$ featuring in Proposition \ref{lem: stitching invertible}. To that end, we introduce  the truncated interactions $\hat q(s)$ by
$$
\hat{q}_X(s) = \begin{cases}  q_X(s) &  X \subset Z\quad \text{or} \quad X \subset Z^c  \\
0  & \text{otherwise} \end{cases}  
$$
with $q(s)$ as given in assumption \ref{ass: invertibility}.
We let $\hat U(s)$ be the unitary defined analogously to  $U(s)$ in the invertibility assumption \ref{ass: invertibility}, but with $H_{q}(s)$ replaced by $H_{\hat q}(s)= \sum_{X} \hat{q}_X(s)$. We then set 
$$
V(s)= \hat U(s) U^*(s),\qquad    V=V(1)
$$
and we observe that
\begin{equation}\label{eq: spatial split}
V \Omega\otimes^{\mathrm{aux}} \Omega' =  \Omega_Z \otimes \Omega_{Z^c}
\end{equation}
for some $\Omega_Z \in \widetilde\caH_Z, \Omega_{Z^c} \in \widetilde\caH_{Z^c}$. This shows that $\Sigma_Z$ satisfies property 3) of definition \ref{def: locality stitch}.
The full proof of Proposition \ref{lem: stitching invertible}, i.e.\ the verifications of properties 1) and 2) of definition \ref{def: locality stitch},  uses some standard terminology and locality estimates that are not needed in the rest of our proof, hence we postpone them to the appendices, but we do state the relevant locality property of the unitary operators $V$. To that end, let us introduce the conditional expectation, for any $Z\subset\Gamma$,  
$$
\bbE_{Z^c}:\widetilde\caA\mapsto \widetilde\caA_Z:  O \mapsto  \tfrac{1}{N}\tr_{Z^c}(O), \qquad N= \dim (\widetilde \caH_{Z^c}). 
$$
\begin{lemma}\label{lem:splitting2}
For any regions $X,Z$ and $O \in \widetilde\caA_X$, with $V=V_Z$ as defined above, 
\begin{equation}\label{eq: lr}
|| V^*OV -\bbE_{(X_r)^c}(V^*OV) || \leq ||O|| |X| m(r)
\end{equation}
and
\begin{equation}\label{eq: locality of vz}
||V^*OV-O || \leq  ||O || |X|  m(\dist(X,\partial Z)) 
\end{equation}
The same estimates hold\emph{•} as well if we exchange $V$ and $V^*$.
\end{lemma}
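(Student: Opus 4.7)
The plan is to introduce an interaction-picture unitary that isolates the terms of $q(s)$ crossing $\partial Z$. Setting $W(s) := U^*(s)\hat U(s)$, one computes directly from assumption~\ref{ass: invertibility} that $\dot W(s) = -i\tilde{\delta H}(s) W(s)$, where $\tilde{\delta H}(s) := U^*(s)\delta H(s) U(s)$ and
\[
\delta H(s) := H_q(s) - H_{\hat q}(s) = \sum_{Y: Y\cap Z\neq\emptyset,\, Y\cap Z^c\neq\emptyset} q_Y(s)
\]
consists precisely of the straddling terms, i.e.\ those $q_Y$ with $Y$ connecting $Z$ to $Z^c$. Since $\hat U = UW$, this yields the decisive identity $V = U W U^*$. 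For~\eqref{eq: lr}, both $U(1)$ and $\hat U(1)$ are propagators of interactions uniformly bounded in the $\|\cdot\|_{m_I}$ norm ($\hat q$ inheriting the bound from $q$ term by term), so a standard Lieb-Robinson (LR) estimate for $\caM$-class interactions produces, for any $O\in\widetilde\caA_X$ and any $r$, an element $O^{(r)}\in\widetilde\caA_{X_r}$ such that $\|U^*(1)OU(1) - O^{(r)}\| \leq C\|O\||X|m'(r)$ with $m'\in\caM$, and similarly for $\hat U$. Chaining the two estimates in succession for $V^* O V = U\hat U^* O\hat U U^*$ then gives~\eqref{eq: lr}.

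For~\eqref{eq: locality of vz}, the key simplification is that $U(U^*OU)U^* = O$, which reduces the problem to the interaction-picture evolution:
\[
V^*OV - O = U\bigl(W^* O_1 W - O_1\bigr)U^*, \qquad O_1 := U^*(1) O U(1),
\]
so $\|V^*OV-O\| = \|W^*O_1 W - O_1\|$. Duhamel's principle combined with unitary invariance of the operator norm then gives
\[
\|V^*OV - O\| \leq \int_0^1 \|[\delta H(s), \tilde O_s]\|\,ds, \qquad \tilde O_s := U(s) O_1 U^*(s).
\]
Since $\tilde O_s$ is the Heisenberg evolution of $O$ under $H_q$ between times $s$ and $1$, the LR commutator bound applied term by term produces
\[
\|[q_Y(s), \tilde O_s]\| \leq C\|q_Y(s)\|\|O\|\min(|X|,|Y|) F_{LR}(\dist(X,Y))
\]
with $F_{LR}\in\caM$, for each straddling $Y$.

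The main technical obstacle is to sum these commutator bounds over $Y$ straddling $\partial Z$ while obtaining a bound proportional to $|X|$ with quasi-exponential decay in $R:=\dist(X,\partial Z)$, since a naive count introduces an unwanted volume factor such as $|\partial Z|$ or $|\Gamma|$. The decisive combinatorial input is the straddling constraint itself: if $Y\ni y$ straddles $\partial Z$, then any path inside $Y$ from $y$ to a point on the opposite side of $\partial Z$ must cross $\partial Z$, forcing $\diam(Y)\geq\dist(y,\partial Z)$. Combined with $|Y|\leq 1 + C_\Gamma\diam(Y)^d$ (whose polynomial growth is absorbed into quasi-exponential decay in $\caM$) and the norm bound $\|q\|_{m_I}<\infty$, this yields $\sum_{Y\ni y,\,\text{straddling}} \|q_Y(s)\||Y| \leq C\, m_I'(\dist(y,\partial Z))$ for some $m_I'\in\caM$. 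Using the triangle inequality $\dist(X,y)+\dist(y,\partial Z)\geq R$ so that $\max(\dist(X,y),\dist(y,\partial Z))\geq R/2$, and organizing the remaining sum by spherical shells around $X$ whose cardinalities grow polynomially ($|B_r(x)|\leq 1+C_\Gamma r^d$), one obtains $\sum_Y \|q_Y(s)\||Y| F_{LR}(\dist(X,Y)) \leq C|X|m(R)$ for some $m\in\caM$, which gives~\eqref{eq: locality of vz} after integrating in $s$. The versions with $V$ and $V^*$ exchanged follow from the identical analysis applied to $W^*$ in place of $W$.
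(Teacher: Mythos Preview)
Your argument is correct, but it is organized differently from the paper's. The paper first shows (Lemma~\ref{lem: auto form for v}, via the evolved-interaction calculus of Lemma~\ref{lem: evolved interactions}) that $V^*(\cdot)V=\alpha_l(1)[\cdot]$ for a single time-dependent interaction $l$ that is \emph{anchored in $\partial Z$} and has finite $|||\cdot|||_{F_l}$-norm for some $F_l\in\caM$. Once this is in hand, both bounds drop out uniformly: \eqref{eq: lr} from a single Lieb--Robinson estimate for $\alpha_l$ combined with Lemma~\ref{lem: com and norm}, and \eqref{eq: locality of vz} from the Heisenberg identity $V^*OV-O=i\int_0^1 \alpha_l(u)[H_l(u),O]\,du$ together with the anchoring of $l$, which forces every nonzero $[l_S(u),O]$ to have $S\cap\partial Z\neq\emptyset$ and $S\cap X\neq\emptyset$; the sum is then controlled directly by $|||l|||_{F_l}\sum_{x\in X,y\in\partial Z}F_l(\dist(x,y))$. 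Your route avoids constructing $l$ altogether: for \eqref{eq: lr} you chain two separate Lieb--Robinson localizations (for $\hat U$ and for $U$), which is exactly the ``composition of almost locality preserving maps'' argument of Appendix~\ref{sec: locality of cp}; for \eqref{eq: locality of vz} you keep the bare straddling interaction $\delta H$ and instead evolve the observable, so that the whole burden falls on the explicit double-decay sum over straddling $Y$. Your key combinatorial observation $\diam(Y)\geq\dist(y,\partial Z)$ for any $y\in Y$ is correct and is what makes the $y$-sum finite independently of $|\partial Z|$; the $|X|$ prefactor then arises most cleanly by using the full Lieb--Robinson form $\sum_{x\in X,y\in Y}F(\dist(x,y))$ rather than $\min(|X|,|Y|)F(\dist(X,Y))$, and splitting according to whether $\dist(x,y)\geq R/2$ or $\dist(y,\partial Z)\geq R/2$. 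The paper's packaging into $l$ is more systematic and makes the two estimates symmetric, while your approach is more elementary in that it does not invoke the general formula of Lemma~\ref{lem: evolved interactions} item~3).
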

The proof of Lemma \ref{lem:splitting2} follows from well-known considerations based on Lieb-Robinson bounds, and it is sketched in Appendix~\ref{sec: locality estimates}.
Finally, the proof of properties 1) and 2) of definition \ref{def: locality stitch} is an intuitive consequence of Lemma \ref{lem:splitting2}. This proof is given in Appendix \ref{sec: locality of cp}.

\subsubsection{Action of a stitching map $\Sigma_Z$ at the stitch $\partial Z$}

We will now establish a property of stitching maps, which we will henceforth denote by $\Sigma_Z$, that relies crucially on the OBC-regularity assumption \ref{ass: LTO}. It makes explicit that stitching maps are `seamless', i.e.\ they do not introduce errors at the cut $\partial Z$.  Recall that $P_X$ is the ground state projector corresponding to the OBC restriction $H_X$ and let $\bar P_X=\ident-P_X$.  

\begin{proposition}\label{lem: stitching}
Let $\sigma=\rho_\Psi$ be the pure density matrix associated to some $\Psi \in\caH$.  Consider a set $Z$ and  balls $B_r=B_r(x)$ for some site $x$.    Then (recall that $p(\cdot)$ is a polynomial),
$$\langle \bar P_{B_{r-k}} \rangle_{\Sigma_Z(\sigma)} \leq   3\langle \bar P_{B_r} \rangle_\sigma+p(r)m(k).
 $$
\end{proposition}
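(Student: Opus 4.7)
The plan is to reduce the bound to the case where the input state to $\Sigma_Z$ lies in $\Ker(H_{B_r})$, where OBC-regularity applies; one can then use property 2 of Definition \ref{def: locality stitch} together with $\Sigma_Z(\mu)=\mu$ to transfer the approximation through $\Sigma_Z$. Concretely, I would decompose $\Psi = \Psi_0 + \Psi_1$ with $\Psi_0 = P_{B_r}\Psi \in \Ker(H_{B_r})$ and $\Psi_1 = \bar P_{B_r}\Psi$, noting $\|\Psi_1\|^2 = \langle \bar P_{B_r}\rangle_\sigma$. The elementary operator inequality $(|\Psi_0\rangle - |\Psi_1\rangle)(\langle\Psi_0| - \langle\Psi_1|) \geq 0$ yields
\[
\rho_\Psi = |\Psi_0+\Psi_1\rangle\langle\Psi_0+\Psi_1| \leq 2|\Psi_0\rangle\langle\Psi_0| + 2|\Psi_1\rangle\langle\Psi_1|.
\]
Writing $\sigma_i = |\Psi_i\rangle\langle\Psi_i|/\|\Psi_i\|^2$ for the normalized pure states (when $\Psi_i \neq 0$), and applying the CP map $\Sigma_Z$ (which preserves this operator inequality) followed by pairing with $\bar P_{B_{r-k}} \geq 0$, one gets
\[
\langle\bar P_{B_{r-k}}\rangle_{\Sigma_Z(\sigma)} \leq 2\|\Psi_0\|^2\,\langle\bar P_{B_{r-k}}\rangle_{\Sigma_Z(\sigma_0)} + 2\langle\bar P_{B_r}\rangle_\sigma,
\]
using the trivial bound $\langle\bar P_{B_{r-k}}\rangle_{\Sigma_Z(\sigma_1)} \leq 1$.

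For the in-kernel contribution, OBC-regularity (Assumption \ref{ass: LTO}) applied to $\sigma_0$ gives $|\sigma_0 - \mu|_{B_{r-k/2}} \leq r^{d_O} m_O(k/2)$. Combining with property 2 of Definition \ref{def: locality stitch} at scale $k/2$, used with $\omega = \mu$ and $\Sigma_Z(\mu) = \mu$ (property 3), yields
\[
|\Sigma_Z(\sigma_0) - \mu|_{B_{r-k}} \leq |\sigma_0 - \mu|_{B_{r-k/2}} + |B_{r-k}|\,m(k/2) \leq p(r)\,\tilde m(k),
\]
for some polynomial $p$ and some $\tilde m \in \caM$ (the class is stable under $k \mapsto k/2$). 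Frustration-freeness implies $\Omega \in \Ker(H_{B_{r-k}})$, so $\bar P_{B_{r-k}}\mu = 0$, and therefore
\[
\langle\bar P_{B_{r-k}}\rangle_{\Sigma_Z(\sigma_0)} = \tr\bigl((\Sigma_Z(\sigma_0) - \mu)\bar P_{B_{r-k}}\bigr) \leq |\Sigma_Z(\sigma_0) - \mu|_{B_{r-k}} \leq p(r)\,\tilde m(k).
\]
Plugging this back in and absorbing $\|\Psi_0\|^2 \leq 1$ gives $\langle\bar P_{B_{r-k}}\rangle_{\Sigma_Z(\sigma)} \leq 2\langle\bar P_{B_r}\rangle_\sigma + 2p(r)\tilde m(k)$, which is the claim (the stated constant $3$ providing harmless slack).

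The hard part is controlling the off-diagonal cross terms $|\Psi_0\rangle\langle\Psi_1| + |\Psi_1\rangle\langle\Psi_0|$ hidden inside $\rho_\Psi$. A naive trace-norm bound on these would yield only $\sqrt{\langle\bar P_{B_r}\rangle_\sigma}$, breaking the linear dependence on $\langle\bar P_{B_r}\rangle_\sigma$ that is needed when this lemma is iterated later in the proof of Theorem \ref{thm: main}. The operator-inequality trick above circumvents this entirely by leveraging complete positivity of $\Sigma_Z$ and positivity of $\bar P_{B_{r-k}}$, so that the cross terms drop out without any square-root loss.
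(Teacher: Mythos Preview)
Your proof is correct and takes a genuinely different route from the paper's. The paper decomposes $\sigma$ into four pieces $P\sigma P$, $P\sigma\bar P$, $\bar P\sigma P$, $\bar P\sigma\bar P$ (with $P=P_{B_r}$) and handles the two off-diagonal pieces via a Cauchy--Schwarz/Stinespring argument (stated separately as Lemma~\ref{lem: stinespring}): for a projector $P$ and trace-preserving CP map $\Gamma$, $|\tr(P\Gamma(|a\rangle\langle b|))|\leq \sqrt{\tr(P\Gamma(|a\rangle\langle a|))}$. This yields contributions $\sqrt{\epsilon}\sqrt{p(r)m(k)}\leq \epsilon+p(r)m(k)$ from each cross term, hence the constant~$3$. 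Your operator inequality $\rho_\Psi\leq 2|\Psi_0\rangle\langle\Psi_0|+2|\Psi_1\rangle\langle\Psi_1|$, pushed through by complete positivity of $\Sigma_Z$ and positivity of $\bar P_{B_{r-k}}$, sidesteps the cross terms entirely and avoids the auxiliary lemma; the in-kernel part is then treated exactly as in the paper. Your argument is shorter, uses only positivity rather than Cauchy--Schwarz, and in fact delivers the constant~$2$ rather than~$3$. The paper's four-term decomposition is perhaps more transparent about where each contribution originates, but your approach is cleaner.
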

\begin{proof}
Since $Z$ will be fixed, we drop it and write $\Sigma=\Sigma_Z$.
Let us denote $\epsilon:=\langle \bar P_{B_r} \rangle_\sigma $ and remark that $||P_{B_r}\Psi-\Psi|| \leq \sqrt{\epsilon}$.  If $\epsilon=1$, then the required bound is trivial, hence we assume $\epsilon<1$. 
We split 
$$
\sigma= \sum_{i=1}^4  \zeta_i =  P_{B_r} \sigma P_{B_r} +  P_{B_r} \sigma \bar P_{B_r} + \bar P_{B_r} \sigma P_{B_r}+ \bar P_{B_r} \sigma \bar P_{B_r}.
$$ 
We will treat these terms separately, i,e.\ find bounds on
$$
\tr  \bar P_{B_{r-k}} \Sigma(\zeta_i),\qquad i=1,2,3,4.
$$
\textbf{The term $\zeta_1=P_{B_r}\sigma P_{B_r}$}\\
Since $\epsilon<1$,
 $\tr\zeta_1 >0 $ and we can define 
the pure density matrix $\zeta=\zeta_1/\tr\zeta_1$, satisfying $\langle \bar P_B \rangle_\zeta=0$. 
By the properties of definition \ref{def: locality stitch}, we get the equality and first inequality in 
$$
|\Sigma(\zeta)-\mu|_{B_{r-k}}  =     |\Sigma(\zeta)-\Sigma(\mu)|_{B_{r-k}}   \leq   
   |\zeta-\mu|_{B_{r-k/2}} +  p(r) m(k) \leq  p(r) m(k)
$$
whereas the last inequality follows from the OBC regularity assumption \ref{ass: LTO}.   
By the frustration-free property, $ \tr\bar P_{B_{r-k}}\mu=0 $, and thus the above inequality yields $\tr  \bar P_{B_{r-k}} \Sigma(\zeta)  \leq p(r) m(k) $. Since $\zeta_1\leq \zeta$, this implies 
$$
\tr  \bar P_{B_{r-k}} \Sigma(\zeta_1)  \leq p(r) m(k) 
$$
\noindent 
\textbf{The term $\zeta_4= \bar P_{B_r} \sigma \bar P_{B_r}$}\\    Here we have
$\tr\zeta_4 = \tr \bar P_{B_r} \sigma =  \epsilon$ and since $\Sigma$ is trace-preserving,  
$$
\tr  \bar P_{B_{r-k}} \Sigma(\zeta_4)  \leq  \epsilon. 
$$
\noindent
\textbf{The term $\zeta_2= P_{B_r} \sigma \bar P_{B_r}$}\\ Since $\sigma$ is pure, we can write
$$
\zeta_2= P_{B_r} \sigma \bar P_{B_r} =   \langle \bar P_{B_r} \rangle_\sigma^{1/2} \,     |a \rangle\langle b|,  \qquad ||a|| \leq ||b||=1, \qquad   P_{B_r} a=a
$$
and we note that $|a\rangle\langle a|=\zeta_1$.  To estimate $\tr \bar P_{B_{r-k}}\Sigma{ (|a \rangle\langle b|)}$, we use Lemma \ref{lem: stinespring} below to get
$$
|\tr \bar P_{B_{r-k}} \Sigma(\zeta_2 )| \leq  \sqrt{ \langle \bar P_B\rangle_\sigma }  \sqrt{\tr (\bar P_{B_{r-k}}\Sigma(\zeta_1))}
\leq  \sqrt{\epsilon} \sqrt{p(r)m(k)}   \leq \epsilon+p(r)m(k) 
$$
The second inequality follows from the bound for the term $\zeta_1$ above.
\begin{lemma}\label{lem: stinespring}
For an orthogonal projector $P$,  a trace-conserving CP map $\Gamma$ and $||a||,||b|| \leq 1$, we have
$$
|\tr (P\Gamma(|a \rangle\langle b|))| \leq  \sqrt{\tr (P\Gamma(|a \rangle\langle a|))   \tr (P\Gamma(|b \rangle\langle b|))} \leq  \sqrt{\tr (P\Gamma(|a \rangle\langle a|)) }
$$
\end{lemma}
\begin{proof}
Introducing the adjoint map $\Gamma^*$, the first inequality reads 
$$ |\langle b, \Gamma^*(P)a \rangle| \leq  \sqrt{\langle a, \Gamma^*(P)a \rangle \langle b, \Gamma^*(P)b \rangle }.$$ It follows from  positivity of $\Gamma^*(P)$ and the Cauchy-Schwarz inequality.  
\end{proof}
\noindent
\textbf{The term $\zeta_3= \bar P_{B_r} \sigma P_{B_r}$}\\
The bound and its proof are  analogous to the case of $\zeta_2$. 
The claim of the lemma follows by adding the bounds on  $\tr \bar P_{B_{r-k}} \Sigma(\zeta_i )$ for $i=1,2,3,4$.
 \end{proof}

\subsection{A bound on the energy increase due to stitching}\label{sec: basic inequality}
If we apply the stitching map $\Sigma_Z$ to a pure density matrix $\sigma$, we expect its energy (associated to $H$) to decrease in the far interior of $Z$ and to remain unchanged far outside $Z$. Around the stitch $\partial Z$, the energy can of course increase. In the present section we prove that this energy increase is upper bounded by the energy of $\sigma$ around the stitch. We make this precise in Lemma \ref{lem: error stitch}. The proof is based largely on Proposition \ref{lem: stitching} that allows us to convert a bound on the energy to a bound on the local deviation from $\mu$.

\begin{figure}
\centering
\includegraphics[scale=1.3]{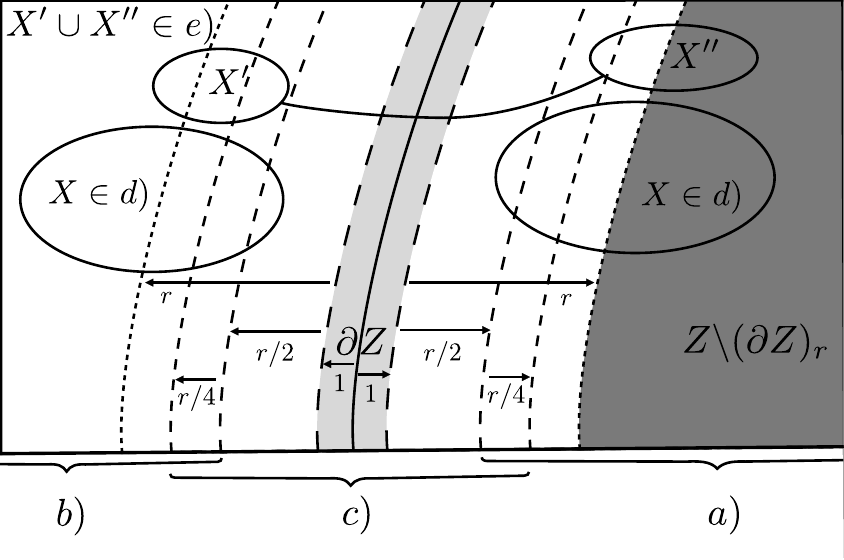}
\caption{The classes of sets $X$ in the proof of Lemma \ref{lem: error stitch}.}
\label{fig:lemma_error_stitch}
\end{figure}
  
\begin{lemma} \label{lem: error stitch}
Let $\sigma$ be a pure density matrix, i.e.\ $\sigma=\rho_\Psi$ for some $\Psi \in\caH$, and let $\Sigma=\Sigma_Z$ for some region $Z \subset \Gamma$.  Then 
$$
\sum_{X \not\subset Z\setminus (\partial Z)_r}   \left(\langle h_X \rangle_{\Sigma(\sigma)}- \langle h_X \rangle_{\sigma}\right)
 \leq Cr^{d+d_{\gamma}}\langle H_{ (\partial Z)_{r}} \rangle_\sigma +  m(r) |\partial Z|
$$
\end{lemma}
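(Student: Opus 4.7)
The plan is to split the sum on the left-hand side into three geometric classes (matching the regions sketched in Figure \ref{fig:lemma_error_stitch}). Fix an intermediate scale $\ell=\ell(r)$, thought of as a small power of $r$. The three classes are: (i) $X$ with $\diam(X)>\ell$; (ii) $X$ with $\diam(X)\leq\ell$, $X\subset Z^c$, $\dist(X,\partial Z)>r$; (iii) $X$ with $\diam(X)\leq\ell$ and $X\cap (\partial Z)_r\neq\emptyset$. Because any edge of $\Gamma$ joining $Z$ and $Z^c$ has both endpoints in $\partial Z$, a ``straddling'' $X$ of diameter at most $\ell<r$ must already intersect $(\partial Z)_r$, so these three classes exhaust the sum.

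Class (i) will be dispatched by quasi-exponential decay of $\|h\|_{m_h}$: for any $x$, $\sum_{X\ni x,\,\diam(X)>\ell}\|h_X\|\leq m(\ell)$ for some $m\in\caM$. Bounding $|\langle h_X\rangle_{\Sigma(\sigma)}-\langle h_X\rangle_\sigma|\leq 2\|h_X\|$ and summing the anchor point $x$ over the (at most $|\partial Z|\, p(r)$ many) relevant sites, the polynomial factor is absorbed into $m$ to give a contribution $\leq m(r)|\partial Z|$. For class (ii) I will invoke property 1 of definition \ref{def: locality stitch}: since $X\subset Z^c$, the marginal of $\tr_{Z^c}\mu\otimes\tr_{Z}\sigma$ on $X$ coincides with that of $\sigma$, so $|\Sigma(\sigma)-\sigma|_X\leq |X|\, m(\dist(X,\partial Z))$, and hence $|\langle h_X\rangle_{\Sigma(\sigma)}-\langle h_X\rangle_\sigma|\leq \|h_X\||X|m(\dist(X,\partial Z))$. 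Summing over distance shells from $\partial Z$ again yields $m(r)|\partial Z|$.

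Class (iii) carries the main content. Since $\langle h_X\rangle_\sigma\geq 0$, it suffices to upper-bound $\langle h_X\rangle_{\Sigma(\sigma)}$. For each such $X$ I pick $z_X\in\partial Z$ nearest to $X$ and a ball $B=B_s(z_X)\supset X$ of radius $s\sim r$, and then chain together: (a) frustration-freeness (Assumption \ref{ass: ff}) gives $0\leq h_X\leq \|h_X\|\bar P_B$, since $P_B\leq P_{\{X\}}$ is immediate from $\Ker(H_B)=\cap_{Y\subset B}\Ker(h_Y)$; (b) the stitching lemma (Proposition \ref{lem: stitching}) with gap parameter $k\sim r$ yields $\langle \bar P_B\rangle_{\Sigma(\sigma)}\leq 3\langle \bar P_{B'}\rangle_\sigma + p(r)m(r)$ for the enlarged ball $B'=B_{s+k}(z_X)$; (c) the local gap assumption gives $\bar P_{B'}\leq C_\gamma(s+k)^{d_\gamma} H_{B'}\leq C\, r^{d_\gamma} H_{(\partial Z)_r}$, after rescaling $r$ by a constant absorbed into the prefactor. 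Summing over $X$ and exchanging the order of summation,
$$
\sum_{X}\|h_X\|\langle H_{B'_X}\rangle_\sigma \;=\; \sum_{Y}\langle h_Y\rangle_\sigma \sum_{X:\,Y\subset B'_X}\|h_X\|,
$$
the inner sum is bounded by $Cr^d$: the constraint $Y\subset B'_X$ restricts the anchor $z_X$ to an $O(r)$-neighborhood of $Y$ (hence at most $Cr^d$ sites by the dimension of $\Gamma$), and for each anchor $\sum_{X\ni z}\|h_X\|$ is bounded by a constant using $m_h(1)<\infty$ together with $\|h\|_{m_h}<\infty$. This produces the leading term $Cr^{d+d_\gamma}\langle H_{(\partial Z)_r}\rangle_\sigma$, while the error $p(r)m(r)$ from step (b), multiplied by the polynomial-volume sum $\sum\|h_X\|\leq p(r)|\partial Z|$, contributes to the subleading $m(r)|\partial Z|$.

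The main obstacle I anticipate is the radius bookkeeping in class (iii): the fattening of the stitch, the radius of the ball containing $X$, the stitching-lemma gap $k$, and the local-gap radius must all be chosen as consistent constant fractions of a ``global'' scale so that $B'\subset(\partial Z)_{O(r)}$ and the multiplicative constants can be absorbed into the single $(\partial Z)_r$ on the right. A secondary issue is keeping $|\partial Z|$ out of the leading term, which is resolved by the local counting in the double-sum exchange above, where the anchors $z_X$ are counted within an $r$-neighborhood of $Y$ rather than over all of $\partial Z$.
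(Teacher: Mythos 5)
Your plan mirrors the paper's proof structurally: split the sum by diameter and by distance to the stitch, handle the small-diameter/near-stitch class by chaining frustration-freeness, Proposition~\ref{lem: stitching}, and the local gap (this is the paper's class~$c$), and invoke property~1 of Definition~\ref{def: locality stitch} for far-away sets in $Z^c$. However, your class~(i) has a genuine hole. You sweep into it \emph{every} $X$ with $\diam(X)>\ell$, including sets lying entirely in $Z^c$ at arbitrary distance from $\partial Z$, and bound their contribution by $2\|h_X\|$. But the anchors of such $X$ range over all of $Z^c$, not over an $O(r)$-fattening of $\partial Z$; your count of ``at most $|\partial Z|\,p(r)$ relevant sites'' is simply false, and the resulting bound is $|\Gamma|\,m(\ell)$, which is not controlled by $|\partial Z|\,m(r)$ since $|\Gamma|$ can vastly exceed any polynomial in $r$ times $|\partial Z|$. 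You must intersect class~(i) with a near-stitch region to confine the anchors, and treat large-diameter far-away sets separately: for $X\subset Z^c$ far from $\partial Z$ use the stitching locality $|\langle h_X\rangle_{\Sigma(\sigma)}-\langle h_X\rangle_\sigma|\leq \|h_X\|\,|X|\,m(\dist(X,\partial Z))$ (as you already do for class~(ii)); for far-away \emph{straddling} $X$, note that $\diam(X)\geq \dist(x,\partial Z)$ for every $x\in X$, which converts the $m_h$-decay in diameter into decay in distance (the paper's class~$e$).

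The second issue is the one you flag: the radii in class~(iii) do not fit. With $X$ touching $(\partial Z)_r$ and $z_X\in\partial Z$ its nearest boundary point, the ball $B_s(z_X)$ containing $X$ requires $s\geq r$, so the enlarged ball $B'=B_{s+k}(z_X)$ sits in $(\partial Z)_{s+k}$ with $s+k\geq 2r$ and hence $H_{B'}\not\leq H_{(\partial Z)_r}$. The phrase ``absorb the multiplicative constants into the single $(\partial Z)_r$'' does not make sense: a fattening radius in a subscript is not a prefactor that can be rescaled away. The cure is to classify around $(\partial Z)_{r/2}$ with $\diam(X)<r/4$, anchor on a point $x\in X\cap(\partial Z)_{r/2}$ rather than on $z_X$, take $B_{r/4}(x)$, apply the stitching proposition with gap $k=r/4$ to reach $B_{r/2}(x)\subset(\partial Z)_r$. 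This anchoring also repairs your double-sum multiplicity: counting how many balls $B_{r/2}(x)$ with $x\in(\partial Z)_{r/2}$ contain a given $Y$ yields the clean $Cr^d$, whereas your $z_X$-anchoring would produce $Cr^{2d}$ and hence a weaker exponent in the final theorem.
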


\begin{proof}
We split the class of  $X$ contributing the left hand side\footnote{The condition $X \not\subset Z\setminus (\partial Z)_r$ appearing in the statement of the lemma, is actually irrelevant for its validity.} in disjoint classes $a,b,c,d,e$: 
\begin{itemize}
\item[$a)$] $X \subset Z\setminus (\partial Z)_{r/2}$
\item[$b)$]   $X \subset Z^c\setminus (\partial Z)_{r/2}$
\item[$c)$]  $ X \cap (\partial Z)_{r/2} \neq \emptyset$ and $\diam(X)< r/4$. 
\item[$d)$]   $X \cap (\partial Z)_{r/2} \neq \emptyset$ and $\diam(X)\geq r/4$.
\item[$e)$]   $X \subset ((\partial Z)_{r/2})^c $ and $X \cap Z \neq \emptyset$ and $X \cap Z^c \neq \emptyset$ .  
\end{itemize}
 
We will now estimate, with $x=a,b,c,d,e$,
\begin{equation}\label{eq: different classes}
\sum_{ X \in \text{class} \, x}   \left(\langle h_X \rangle_{\Sigma(\sigma)}- \langle h_X \rangle_{\sigma}\right)
\end{equation}

\begin{enumerate}
\item[\textbf{class $a$}]
\begin{align*}
\sum_{ X \in \text{class} \, a}   \left(\langle h_X \rangle_{\Sigma(\sigma)}- \langle h_X \rangle_{\sigma}\right)
& \leq 
\sum_{X \subset Z\setminus (\partial Z)_{r/2}}  \langle h_X \rangle_{\Sigma(\sigma)} \\
& \leq 
\sum_{X \subset Z\setminus (\partial Z)_{r/2}}  \left(\langle h_X \rangle_{\mu} + ||h_X|| |X| m(\dist(X,\partial Z)\right)  \\
 & \leq   \sum_{x: \dist(x,\partial Z) \geq r/2}  m(\dist(x,\partial Z)
\sum_{X \ni x}   ||h_X|||X| \leq
  |\partial Z|m(r)
\end{align*}
The first inequality follows from non-negativity of $h_X$. The second from  
property 1) of definition \ref{def: locality stitch} and the third from $\langle h_X\rangle_\mu=0$.  Then we use $ \sup_{x} \sum_{X \ni x}   ||h_X|||X| \leq C$ and the rapid decay of $m$ to absorb a polynomial factor in $r$.  
\item[\textbf{class $b$}]
\begin{align*}
\sum_{ X \in \text{class} \, b}   \left(\langle h_X \rangle_{\Sigma(\sigma)}- \langle h_X \rangle_{\sigma}\right)
&\leq 
 \sum_{X \subset Z^c\setminus (\partial Z)_{r/2}}  ||h_X|| |X|  m(\dist(X,Z)) \\
  &\leq  \sum_{x:\dist(x, Z)\geq r/2}  m(\dist(x,Z))  \sum_{ X \ni x }  ||h_X|| |X|  \leq |\partial Z|  m(r) 
\end{align*}
The first inequality is property 1) of definition \ref{def: locality stitch}, the rest is analogous to case $a$ above. 
\item[\textbf{class $c$}]
\begin{align*}
\sum_{ X \in \text{class} \, c}   \left(\langle h_X \rangle_{\Sigma(\sigma)}- \langle h_X \rangle_{\sigma}\right)
 & \leq \sum_{x\in (\partial Z)_{r/2}} \quad \sum_{ X\ni x,\diam (X) \leq r/4}    \langle h_X \rangle_{\Sigma(\sigma)}  \\
 &   \leq \sum_{x\in (\partial Z)_{r/2}}  \quad \sum_{X\ni x,\diam (X) \leq r/4}  ||h_X||  \langle \bar P_{X} \rangle_{\Sigma(\sigma)}    \\
  & \leq \sum_{x\in (\partial Z)_{r/2}}  C \langle \bar P_{B_{r/4}(x)} \rangle_{\Sigma(\sigma)} \\
  & \leq \sum_{x\in (\partial Z)_{r/2}}   C   \langle \bar P_{B_{r/2}(x)} \rangle_{\sigma} +m(r)   \\
   & \leq  \sum_{x\in (\partial Z)_{r/2}}   Cr^{d_\gamma} \langle  H_{B_{r/2}(x)} \rangle_{\sigma} +m(r)   \\ 
    & \leq  Cr^{d+d_\gamma} \langle  H_{(\partial Z)_r} \rangle_{\sigma} +   |(\partial Z)_{r/2}| m(r)   \\   
    & \leq C r^{d+d_\gamma} \langle  H_{(\partial Z)_r} \rangle_{\sigma} +   |\partial Z| m(r)   \\    
\end{align*}
The first and second inequality is by nonnegativity of $h_X$. The third inequality uses $\bar P_X \leq \bar P_{X'}$ for $X\subset X'$, i.e.\ the frustration freeness of assumption \ref{ass: ff}, and  $\sum_{X\ni x} ||h_X||<C$. The fourth inequality uses Proposition \ref{lem: stitching}. The fifth inequality is by the local gap assumption \ref{ass: local gap}.  The sixth inequality follows because any $X\subset (\partial Z)_r$ is included in at most a number $ Cr^d$ of  balls with radius $r/2$. 
For the last inequality, we use $|(\partial Z)_{r/2}| \leq p(r)|\partial Z|$  and we absorbed the polynomial in $m$.

\item[\textbf{class $d$}]
\begin{align*}
\sum_{ X \in \text{class} \, d}   \left(\langle h_X \rangle_{\Sigma(\sigma)}- \langle h_X \rangle_{\sigma}\right)
&\leq  \sum_{x \in (\partial Z)_{r/2}}  \sum_{X \ni x: \diam(X)\geq r/4 }   2 || h_X|| 
 \\
 &  \leq  \sum_{x \in (\partial Z)_{r/2}} m(r)  \leq  |\partial Z| m(r)
\end{align*}
The first inequality follows from $\langle h_X \rangle_{\Sigma(\sigma)}, \langle h_X \rangle_{\sigma}\leq ||h_X||$.  Then we use analogous reasoning as  for \textbf{class $a$}.
\item[\textbf{class $e$}]
\begin{align*}
\sum_{ X \in \text{class} \, e}   \left(\langle h_X \rangle_{\Sigma(\sigma)}- \langle h_X \rangle_{\sigma}\right)
\leq   \quad  \sum_{x: \dist(x,\partial Z) \geq r/2} \quad 
\sum_{X \ni x:  \diam(X) \geq  \dist(x,\partial Z)  }   2 || h_X||  \leq |\partial Z| m(r)
\end{align*}
The reasoning is analogous to that  for \textbf{class $d$} and \textbf{class $a$}.
\end{enumerate}

\end{proof}
The previous Lemma \ref{lem: error stitch} was only concerned with the unperturbed Hamiltonian $H$. We now state an analogous estimate for the Hamiltonian $J$.
\begin{figure}
\centering
\includegraphics[scale=0.9]{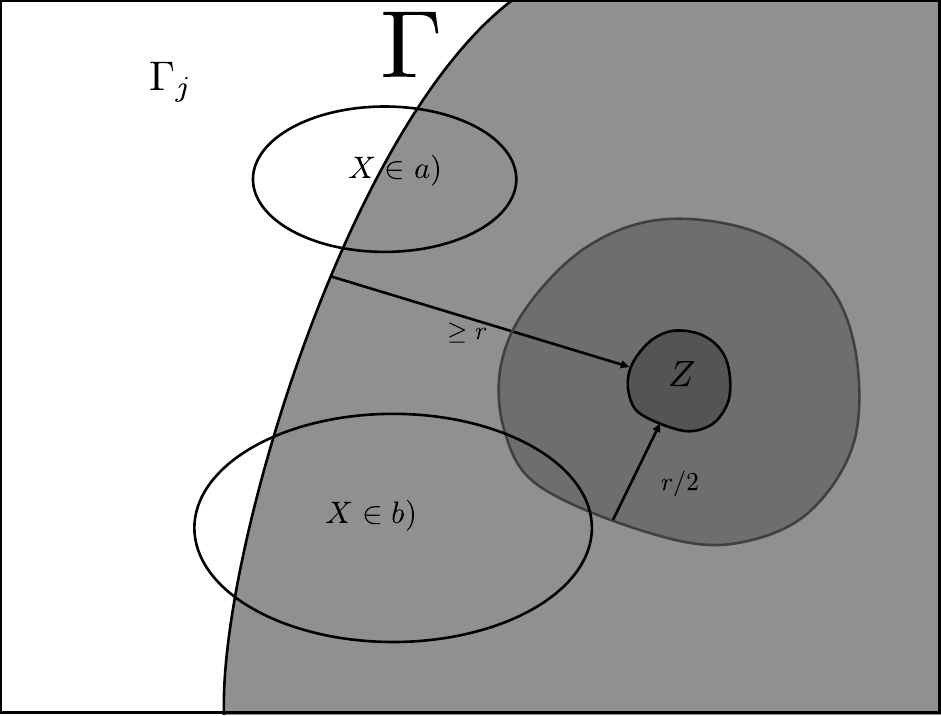}
\caption{The classes of sets $X$ in the proof of Lemma \ref{lem: error stitch j}.  We only need to consider $X$ such that $X\cap \Gamma_j\neq\emptyset$, by the definition of $j$.}
\label{fig:lemma_error_stitch_j}
\end{figure}

\begin{lemma} \label{lem: error stitch j}
Let $\sigma$ be a density matrix and let $Z$ be a set such that $\dist(Z,\Gamma_j) \geq r$.
Then
\begin{equation}\label{eq: drop j}
|\langle J \rangle_{\Sigma_Z(\sigma)}- \langle J \rangle_{\sigma}| \leq  m(r) |\partial Z|
\end{equation}
\end{lemma}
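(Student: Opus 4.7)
The plan is to exploit three facts: $j_X = 0$ unless $X\cap\Gamma_j\neq\emptyset$, property (1) of Definition~\ref{def: locality stitch} which controls $\Sigma_Z$ on observables far from $\partial Z$, and the geometric assumption $\dist(\Gamma_j,Z)\geq r$ which converts distance to $\Gamma_j$ into distance to $\partial Z$.

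First, I would split the contributing $X$ (those with $X\cap\Gamma_j\neq\emptyset$) into two classes: (a) $X\subset Z^c$, and (b) $X\cap Z\neq\emptyset$. Any class-(b) $X$ contains both a point of $Z$ and a point of $\Gamma_j\subset Z^c$ with mutual distance at least $r$, so $\diam X\geq r$. For class (a), since $X\subset Z^c$ we have $X^c = Z\cup(Z^c\setminus X)$, so partial tracing of the product state $\tr_{Z^c}\mu\otimes\tr_Z\sigma$ over $X^c$ annihilates the $\mu$ factor entirely and returns $\tr_{X^c}\sigma$. Property (1) of Definition~\ref{def: locality stitch} then directly yields
\[
|\langle j_X\rangle_{\Sigma_Z(\sigma)} - \langle j_X\rangle_\sigma| \leq \|j_X\|\,|X|\,m(\dist(X,\partial Z)).
\]
For class (b) I would use only the trivial bound $2\|j_X\|$ and rely on the quasi-exponential decay of $\|j\|_{m_j}$.

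Next, to extract the factor $|\partial Z|$, I would reorganize both sums by anchoring each contributing $X$ to a point $z\in\partial Z$. For class (a), pick $y\in X$ closest to $\partial Z$; since $X$ contains a point of $\Gamma_j$ and $\dist(\Gamma_j,\partial Z)\geq r-1$, we have $\dist(y,\partial Z)\geq r-1-\diam X$. For class (b), pick $y\in X\cap Z$; since any path from $y\in Z$ to $\Gamma_j\subset Z^c$ must cross $\partial Z$, we get $\dist(y,\Gamma_j)\geq \dist(y,\partial Z)+r-1$, hence $\diam X$ inherits the same lower bound. In both cases the inner sum $\sum_{X\ni y}\|j_X\||X|$ is uniformly bounded using $\|j\|_{m_j}<\infty$ together with the polynomial growth $|X|\leq 1+C_\Gamma\diam(X)^d$, while the outer sum over shells $\{y:\dist(y,z)=k\}$ is controlled by the shell estimate $\leq C_\Gamma k^{d-1}$ and collapses to a single function $m(r)\in\caM$ once the quasi-exponential decays are combined and polynomial prefactors absorbed. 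Summing over $z\in\partial Z$ (with harmless overcounting) then produces the claimed bound $m(r)|\partial Z|$.

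The main obstacle is precisely this last geometric reorganization: direct bookkeeping anchored on $\Gamma_j$ naturally yields $|\Gamma_j|m(r)$, and obtaining the sharper $|\partial Z|$ requires using the fact that every contributing $X$ is anchored to $\partial Z$ either by property (1) (class (a) is suppressed by $m(\dist(X,\partial Z))$) or by diameter (class (b) is suppressed by $m_j(r)$ through the constraint $\diam X\geq r$). The rest of the argument is routine manipulation of locality norms and shell-counting estimates.
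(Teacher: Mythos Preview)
Your strategy and the paper's agree on the two decay mechanisms (property~(1) of Definition~\ref{def: locality stitch} versus the large-diameter bound coming from $\|j\|_{m_j}<\infty$), but the paper chooses a different split: $\dist(X,Z)\geq r/2$ versus $\dist(X,Z)<r/2$. This cleanly separates the mechanisms---the first class has $\dist(X,\partial Z)\geq r/2-1$ automatically, so property~(1) alone already gives $m(r)$ decay and anchoring each $x$ with $\dist(x,Z)>r/2$ to its nearest $z\in\partial Z$ produces the factor $|\partial Z|$; the second class forces $\diam X\geq r/2$, so the trivial bound $2\|j_X\|$ and the locality norm suffice, with the anchor now chosen as the point of $X$ nearest to $Z$ (hence lying in $(\partial Z)_{r/2}$). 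In each class a single mechanism does all the work.

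Your split ($X\subset Z^c$ versus $X\cap Z\neq\emptyset$) handles class~(b) correctly, but your class~(a) contains sets that are \emph{close} to $\partial Z$ yet have large diameter, so property~(1) alone is not enough there. You correctly record the compensating inequality $\dist(X,\partial Z)+\diam X\geq r-1$, but your final bookkeeping does not actually use it: if ``the inner sum $\sum_{X\ni y}\|j_X\||X|$ is uniformly bounded'' by a constant independent of $r$, then the outer shell sum $\sum_k(\text{shell size})\,m(k)$ is also just $O(1)$, yielding $C|\partial Z|$ rather than $m(r)|\partial Z|$. To make your route go through you must \emph{retain} the constraint $\diam X\geq r-1-k$ (with $k=\dist(y,\partial Z)$) inside the inner sum, bound that restricted sum by something like $\tilde m_j(r{-}1{-}k)$ via the locality norm, and only then combine $m(k)\,\tilde m_j(r{-}1{-}k)$ over $k$ to extract an overall $m'(r)$. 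This repair is routine, but it is exactly the extra work the paper's split at $r/2$ avoids.
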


\begin{proof}
We estimate 
$$|\langle J \rangle_{\Sigma_Z(\sigma)}- \langle J \rangle_{\sigma}|  \leq \sum_{X} |\langle j_X \rangle_{\Sigma_Z(\sigma)}- \langle j_X \rangle_{\sigma}|$$.  We split the sets $X$ into two disjoint classes: 
\begin{itemize}
\item[$a)$]  $\dist(X,Z) \geq r/2$
\item[$b)$]  $\dist(X,Z) < r/2$
\end{itemize}
We estimate the contribution of each class and we get in each case the desired bound.
\begin{enumerate}
\item[\textbf{class $a$}]
\begin{align*}
\sum_{X \in \text{class a}} |\langle j_X \rangle_{\Sigma_Z(\sigma)}- \langle j_X \rangle_{\sigma}| 
& \leq  \sum_{x: \dist(x,Z)>r/2 }  m(\dist(x,Z))  \sum_{X \ni x}    || j_X ||   \\
& \leq  C \sum_{x: \dist(x,Z)>r/2 }  m(\dist(x,Z))   \leq m(r) |\partial Z|  \\
\end{align*}
We use property 1 of definition \ref{def: locality stitch} to get the first inequality. The last inequality follows by the rapid decay of $m$ and the fact that $\Gamma$ has finite dimension $d$. 
%
%
\item[\textbf{class $b$}]
\begin{align*}
\sum_{X \in \text{class b}} |\langle j_X \rangle_{\Sigma_Z(\sigma)}- \langle j_X \rangle_{\sigma}|
& \leq  \sum_{k\geq r/2} \, \sum_{x: \dist(x,\Gamma_j)=k  } \,  \sum_{X\ni x: \diam(X) \geq k} 2 ||j_X||  \\
& \leq \sum_{k\geq r/2} \, \sum_{x: \dist(x,\Gamma_j)=k  }  m(k)   \leq   \sum_{k\geq r/2} Ck^d |\partial Z| m(k)  \leq m(r)|\partial Z| \\
\end{align*}
The first inequality follows because $j_X=0$ unless $X \cap \Gamma_j \neq \emptyset$.  In the last inequality, we again absorbed a polynomial factor in $m$.
\end{enumerate}
  
\end{proof}

\subsection{Isoperimetry}
From now on, we assume that $\sigma=\rho_{\Phi}$ with $\Phi$ a ground state of $H+J$. In particular, this means that the following inequality holds: 
\begin{equation}
\langle H+J \rangle_{\Sigma_Z(\sigma)}- \langle H+J \rangle_{\sigma}\geq 0.
\end{equation}
We will use Lemma's \ref{lem: error stitch} and \ref{lem: error stitch j} to relate the energy in a region $Z$ to the energy around $\partial Z$, which explains the title of this subsection.  We again choose the region $Z$ to be far away from $\Gamma_j$: $\dist(Z,\Gamma_j)\geq r$ with $r \gg 1$. 
By Lemma \ref{lem: error stitch j}, we can discard $J$ in the above inequality at small cost, and we obtain 
\begin{equation}
\langle H \rangle_{\Sigma_Z(\sigma)}- \langle H \rangle_{\sigma} + m(r) |\partial Z| \geq 0
\end{equation}
We split this inequality as
\begin{equation}\label{eq: inequality with error}
\langle H_{Z\setminus (\partial Z)_r} \rangle_{\Sigma_Z(\sigma)}- \langle H_{Z\setminus (\partial Z)_r} \rangle_{\sigma} +   \sum_{X \not\subset Z\setminus (\partial Z)_r}   \left(\langle h_X \rangle_{\Sigma_Z(\sigma)}- \langle h_X \rangle_{\sigma}\right) + m(r)|\partial Z|  \geq 0
\end{equation}
from which we get
%
%
%
\begin{lemma}  \label{lem: inequality iterated}
$$
\langle H_{Z\setminus (\partial Z)_r}\rangle_\sigma \leq  Cr^{d+d_{\gamma}} \langle H_{ (\partial Z)_{r}} \rangle_\sigma +  m(r) |\partial Z|
$$
\end{lemma}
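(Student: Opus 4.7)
The proof is essentially one line of bookkeeping combining inequality \eqref{eq: inequality with error} with Lemma \ref{lem: error stitch}, once we add one auxiliary interior estimate.

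The plan is to rearrange \eqref{eq: inequality with error} so that $\langle H_{Z\setminus (\partial Z)_r}\rangle_{\sigma}$ is isolated on the left, giving
$$
\langle H_{Z\setminus (\partial Z)_r}\rangle_{\sigma} \leq \langle H_{Z\setminus (\partial Z)_r}\rangle_{\Sigma_Z(\sigma)} + \sum_{X \not\subset Z\setminus (\partial Z)_r}\left(\langle h_X\rangle_{\Sigma_Z(\sigma)} - \langle h_X\rangle_{\sigma}\right) + m(r)|\partial Z|.
$$
The middle (boundary) sum is exactly the quantity estimated by Lemma \ref{lem: error stitch}, contributing $Cr^{d+d_\gamma}\langle H_{(\partial Z)_r}\rangle_\sigma + m(r)|\partial Z|$ — which is already the form of the desired right-hand side.

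The only remaining piece is the interior term $\langle H_{Z\setminus (\partial Z)_r}\rangle_{\Sigma_Z(\sigma)}$, which I would show is itself bounded by $m(r)|\partial Z|$. For any $X \subset Z \setminus (\partial Z)_r$ one has $\dist(X,\partial Z)\geq r$, so property 1 of Definition \ref{def: locality stitch} gives
$$
\bigl|\langle h_X\rangle_{\Sigma_Z(\sigma)} - \langle h_X\rangle_{\tr_{Z^c}\mu \otimes \tr_Z\sigma}\bigr| \leq \|h_X\|\,|X|\,m(\dist(X,\partial Z)).
$$
Since $X\subset Z$, the second expectation collapses to $\langle h_X\rangle_\mu$, which vanishes by frustration-freeness (Assumption \ref{ass: ff}). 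Summing over $X$, grouping sets by their closest point $x$ to $\partial Z$, using that $\sup_x \sum_{X\ni x}\|h_X\||X|$ is a constant controlled by $\|h\|_{m_h}$, and that for each $k\geq r$ the number of sites $x\in Z$ with $\dist(x,\partial Z)=k$ is at most $p(k)|\partial Z|$ (by the finite-dimension property of $\Gamma$), one gets
$$
\langle H_{Z\setminus (\partial Z)_r}\rangle_{\Sigma_Z(\sigma)} \leq |\partial Z| \sum_{k\geq r} p(k)\,m(k) \leq m(r)|\partial Z|,
$$
where the polynomial is absorbed into a new $m\in\caM$ using the quasi-exponential decay. Plugging this back into the rearranged inequality proves the lemma.

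There is no real obstacle — this is essentially bookkeeping. The only point that deserves a line of care is the decomposition in property 1 of the stitching maps: since $\tr_{X^c}(\tr_{Z^c}\mu\otimes \tr_Z\sigma) = \tr_{X^c}\mu$ only when $X\subset Z$, it is crucial that we restrict to the deep interior $Z\setminus (\partial Z)_r$ so that frustration-freeness can kill the leading term. Once that is observed, the estimate is immediate.
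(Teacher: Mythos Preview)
Your proof is correct and follows essentially the same route as the paper: rearrange \eqref{eq: inequality with error}, apply Lemma \ref{lem: error stitch} to the boundary sum, and bound the interior term $\langle H_{Z\setminus (\partial Z)_r}\rangle_{\Sigma_Z(\sigma)}$ by $m(r)|\partial Z|$ via property~1 of Definition \ref{def: locality stitch} and frustration-freeness. The paper dispatches that last step by pointing to the \textbf{class $a$} argument in the proof of Lemma \ref{lem: error stitch}, which is exactly the computation you wrote out.
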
 
\begin{proof}
We start from inequality \eqref{eq: inequality with error}. 
 By the same easy reasoning as used in the proof of lemma \ref{lem: error stitch} for \textbf{class $a$}, we bound the term $\langle H_{Z\setminus (\partial Z)_r} \rangle_{ \Sigma(\sigma)}$ by $m(r) |\partial Z|$. Then the upper bound of Lemma \ref{lem: error stitch} gives us immediately the desired claim. 
\end{proof}

Let us now put Lemma \ref{lem: inequality iterated} to use. 
We will consider a sequence of regions $Z_i$ chosen as concentric balls $Z_i=B_{R_i+r}=B_{R_i+r}(x)$, with $x$ fixed and $R_i$ to be specified.
We note that 
$$
B_{R_i} = Z_i \setminus (\partial Z_i)_{r-1}
$$
and we abbreviate
$$
E_i= \langle H_{B_{R_i}} \rangle_\sigma, \qquad \delta_i=  \langle H_{(\partial B_{R_{i}+r})_{r-1  }} \rangle_\sigma  
$$
Then, the inequality in Lemma \ref{lem: inequality iterated} reads
\begin{equation}\label{eq: inequality reloaded}
E_{i}\leq Cr^{d+d_\gamma} \delta_i +  m(r)|\partial B_{R_i+r}|   
\end{equation}
Note also that  $E_{i+1} \geq E_i+ \delta_i$ provided that $R_{i+1}\geq R_{i}+2r$. Our strategy will be to establish a lower bound on $E_i$, depending on $E_1$, and then eventually use the a priori upper bound
$$
E_i =\langle H_{B_{R_i}} \rangle_\sigma \leq ||H_{B_{R_i}}  || \leq  CR_i^d
$$
to get an upper bound on $E_1$. 
Proceeding in this way, we obtain   
\begin{lemma}\label{lem: isoperimetry iterated}
Let $R=\dist(x,\Gamma_j)$ with $x$ the center of the balls above. Then, for any $w>0$
$$
\langle H_{B_{R^p}} \rangle_\sigma   \leq C(w) e^{-R^{p-w}},\qquad  p= \frac{1}{d+d_\gamma+2}.
$$
\end{lemma}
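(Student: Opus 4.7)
The plan is to iterate the inequality \eqref{eq: inequality reloaded} along a sequence of nested concentric balls around $x$, amplifying the local energy geometrically as we move outward, and then kill the result by the trivial a priori bound $E_i \leq \|H_{B_{R_i}}\| \leq CR_i^d$ at the outer end. Concretely, fix $r$ (to be chosen as a small power of $R$) and set $R_i = R_1 + 2r(i-1)$, so $R_{i+1} = R_i+2r$, and iterate for $i=1,\ldots,N$ with $N$ the largest integer such that $R_N+2r\leq R$; this guarantees $\dist(Z_i,\Gamma_j)\geq r$ for each $Z_i=B_{R_i+r}(x)$, as required by Lemma~\ref{lem: error stitch j}. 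Also, the shell $(\partial B_{R_i+r})_{r-1}$ and $B_{R_i}$ are disjoint subsets of $B_{R_{i+1}}$, so by frustration-freeness $E_{i+1}\geq E_i+\delta_i$.

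Rearranging \eqref{eq: inequality reloaded} in the form $\delta_i \geq (E_i - m(r)|\partial B_{R_i+r}|)/(Cr^{d+d_\gamma})$ and setting $\lambda := 1/(Cr^{d+d_\gamma})$ yields the recursion
\begin{equation*}
E_{i+1}\;\geq\;(1+\lambda)\,E_i\;-\;\lambda\, m(r)\,|\partial B_{R_i+r}|.
\end{equation*}
Unwinding this recursion (telescoping $(1+\lambda)^{-(i+1)}E_{i+1}-(1+\lambda)^{-i}E_i$) gives
\begin{equation*}
E_1 \;\leq\; (1+\lambda)^{-(N-1)}\,E_N \;+\; \lambda \sum_{i=1}^{N-1}(1+\lambda)^{-(i+1)}\,m(r)|\partial B_{R_i+r}|,
\end{equation*}
and the sum is bounded by a single term of the form $C\,m(r)\,R^{d-1}$ (geometric series). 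Using the a priori bound $E_N\leq CR^d$, we obtain $E_1 \leq CR^d\,(1+\lambda)^{-(N-1)} + C\,m(r)\,R^{d-1}$.

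The key balance is then done by choosing $R_1=R^p$ and $r=R^p$ with $p=1/(d+d_\gamma+2)$. With these choices, $N\asymp R/(2r) = R^{1-p}/2$ and
\begin{equation*}
\lambda(N-1) \;\asymp\; \frac{R}{r^{d+d_\gamma+1}} \;=\; R^{1-p(d+d_\gamma+1)} \;=\; R^{p},
\end{equation*}
so $(1+\lambda)^{-(N-1)} \leq e^{-cR^{p}}$, while the quasi-exponential decay of $m\in\caM$ gives, for any chosen $\alpha<1$, $m(r)\leq C_\alpha e^{-c_\alpha R^{p\alpha}}$. Given $w>0$, pick $\alpha<1$ close enough to $1$ so that $p\alpha>p-w$; then both terms are dominated by $C(w)\,e^{-R^{p-w}}$, yielding the claim. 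The polynomial prefactor $R^d$ is absorbed into the stretched exponential at the cost of increasing $C(w)$.

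The main obstacle is the parameter balancing: one must check that a single choice of $r=R^s$ simultaneously (i) makes the amplification factor $(1+\lambda)^{N}$ large enough to beat the a priori energy bound $R^d$, which requires $s(d+d_\gamma+1)<1$, and (ii) makes the accumulated error $(1+\lambda)^N\cdot m(r)R^{d-1}$ stay small, which requires $s$ to be large enough for $m(R^s)$ to overcome the amplification. The exponent $p=1/(d+d_\gamma+2)$ emerges as the self-consistent sweet spot where $\lambda N$ and the "safe" portion of the quasi-exponential decay both scale like $R^p$, and any loss $w>0$ in the exponent is absorbed by choosing $\alpha\in(1-w/p,1)$ in the definition of $\caM$.
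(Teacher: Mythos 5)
Your proof is correct and follows essentially the same route as the paper: the same sequence of concentric balls spaced by $2r$, the same Grönwall-type unwinding of the geometric recursion $E_{i+1}\geq(1+\lambda)E_i - b_i$, the same use of the a priori bound $E_N\leq CR^d$, and the same balancing $r=R^p$ with $p=1/(d+d_\gamma+2)$ together with the quasi-exponential decay of $m$ to absorb the loss $w$. The only cosmetic difference is that you keep the $\lambda$ factor in the error term explicit where the paper absorbs it into an updated $m$.
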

\begin{proof}

We choose $R_i=(2i-1)r$ for $i=1,\ldots, i_*$ with  $i_*$ the largest integer that is smaller than $R/(2r)$. Using inequality \eqref{eq: inequality reloaded} and $E_{i+1} \geq E_i+ \delta_i$ we obtain
\begin{equation} \label{eq: start gronwall}
E_{i+1} \geq a E_i - b_i,\qquad   a = (1+ c r^{-d - d_\gamma}),\qquad  b_i = m(r) |\partial B_{R_i+r}| 
\end{equation}
where we also updated the function $m$ compared to \eqref{eq: inequality reloaded}. 
We will now use a Gr{\"o}nwall-type inequality. 
Multiplying the  inequality \eqref{eq: start gronwall} by $a^{-i}$ and summing over $i$ yields
$$\sum_{i=1}^k E_{i+1}a^{-i} \geq \sum_{i=1}^k a^{-i+1} E_i + \sum_{i=1}^k b_i a^{-i},$$
which implies 
$$
E_i a^{-i+1} \geq  E_1 - \sum_{j=1}^{i-1} b_j a^{-j}.
$$
and hence, choosing $i=i_*$,
$$
E_1 \leq E_{i_*} a^{-{i_*}+1} + \sum_{j=1}^{{i_*}-1} b_j a^{-j}.
$$
To estimate the sum on the right hand side, we bound $|\partial B_{R_j+r}| \leq C (3jr)^d$ and estimate
$$
\sum_{j=1}^{{i_*}-1} b_j a^{-j} \leq m(r) r^d \sum_{j=1}^{\infty} j^d a^{-j} \leq  m(r).
$$
 Using in addition the a priori bound $E_{i_*}\leq \langle H_{B_R}\rangle_\sigma < CR^d$, we get
$$
E_{1} \leq C R^d a^{1-\frac{R}{2r}} + m(r) \leq CR^d e^{-c R r^{-(d + d_\gamma+1)}} +  m(r).
$$
To optimize the inequality we let $r$ grow with $R$, namely $r=R^{p}$. For $m\in \caM$ and $w>0$, we can find $C(w)$ such that
$m(r)\leq C(w) e^{-r^{1-w/p}}$ and we get hence  
$$
E_1 \leq C(w) e^{-R^{p-w}}.
$$
which is the claim of the lemma, since $R_1=r=R^{p}$.

\end{proof}

\begin{proof}[Proof of Theorem \ref{thm: main}]
We recall that $\sigma=\rho_{\Phi}$ with $\Phi$ a ground state of $H+J$.
By the local gap assumption \ref{ass: local gap}, we have 
$$\langle \bar P_{B_r} \rangle_\sigma \leq C r^{d_\gamma}\langle  H_{B_r} \rangle_\sigma $$ and by Lemma \ref{lem: isoperimetry iterated}, we then get, for any $w>0$,
\begin{equation}\label{eq: bound on p bar end}
\langle \bar P_{B_r} \rangle_\sigma \leq \delta=  C(w)r^{d_\gamma} e^{-R^{p-w}} , \qquad  r=R^p
\end{equation}
 From \eqref{eq: bound on p bar end} we get, with $P=P_{B_r}$ and for $R$ large enough, $P\Phi\neq 0$, and setting $\Theta=\frac{P\Phi}{||P\Phi||}$, we get $|\langle \Theta, \Phi\rangle| \geq \sqrt{1-\delta}$. From this one bounds ($|\cdot|_\Gamma$ is the global trace norm)
$$
|\rho_\Phi-\rho_{\Theta}|_\Gamma  \leq 2\sqrt{\delta}
$$  
Since $\Theta \in \Ker(H_{B_{r}})$, the OBC regularity assumption \ref{ass: LTO} allows us to bound
$$
|\sigma-\mu|_{B_{r-k}} \leq |\sigma-\rho_{\Theta}|_{B_{r-k}}+ |\rho_{\Theta}-\mu|_{B_{r-k}}  \leq 2\sqrt{\delta} +  p(r)m(k)
$$
with $p(\cdot)$ a polynomial. 
We choose $k=r/2$.  Then we get, for any $w>0$
$$
|\sigma-\mu|_{B_{R^{p}/2}} \leq C(w)e^{-R^{p-w}}
$$
This yields the statement of Theorem \ref{thm: main}.
\end{proof}

\appendix 

\section{Appendix: Locality properties of CP maps}\label{sec: locality of cp}
We establish some vocabulary that is helpful for the proof of Proposition \ref{lem: stitching invertible}.
\subsection{Completely positive maps}
We consider completely positive (CP) maps acting from $\caB(\caG)$ to $\caB(\caG')$, with $\caG,\caG'$ finite-dimensional Hilbert spaces.  We need identity-preserving CP maps and trace-preserving CP maps, which are dual to each other. Let $\Upsilon: \caB(\caG) \mapsto\caB(\caG')$ be an identity preserving CP map, then the adjoint trace-preserving CP map $\Upsilon^*: \caB(\caG') \mapsto \caB(\caG)$ is defined by
$$
\tr^{(\caG')}(S \Upsilon(O))=  \tr^{(\caG)}(\Upsilon^*(S) O), \qquad O\in\caB(\caG), S\in\caB(\caG').
$$
From the Russo–Dye theorem, it follows that identity preserving CP maps are contracting w.r.t.\ operator norm:   $||\Upsilon(O)||\leq ||O||$ and trace-preserving CP maps are contracting w.r.t\ trace norm $||\Upsilon^*(S)||_1 \leq  ||S||_1 $ with $||S||_1=\tr(\sqrt{SS^*})$, cf.\ the notation in Section \ref{sec: trace norms}.

\subsection{Conditional expectations}
Let us consider a bipartite finite-dimensional Hilbert space $\caG=\caG_a\otimes\caG_b$. For any density matrix $\sigma_b$ on $\caG_b$, we can define the conditional expectation 
$$\bbE_{\sigma_b}: \caB(\caG)\mapsto \caB(\caG):  O \mapsto \tr_{\caG_b}((\ident_a \otimes\sigma_b) O)\otimes \ident_b.
$$
Note that $\bbE_{\sigma_b}$ is an identity-preserving CP map and that its range  is $\caB(\caG_a)\otimes\ident_b$ which is naturally identified with $\caB(\caG_a)$.  Its adjoint  is
\begin{equation}\label{eq: adjoint cond exp}
\bbE^*_{\sigma_b}:  \caB(\caG) \mapsto   \caB(\caG):  S \mapsto \tr_{\caG_b}(S)\otimes\sigma_b.  
\end{equation}
Let $\iota_b$ be the density matrix $\tfrac{1}{\dim(\caG_b)}\ident_b$ on $\caG_b$, i.e.\ $\iota_b$ corresponds to the tracial state on $\caB(\caG_b)$. 
The case where $\caG_b$ is $\caH_Z$ or $\widetilde\caH_Z$ for some $Z\subset\Gamma$, and $\sigma_b=\iota_b$ is of special signficance to us and we denote the corresponding conditional expectation simply by $\bbE_Z$. This is consistent with the usage of the symbol  $\bbE_Z$ in Lemma \ref{lem:splitting2}.

\subsection{Almost locality preserving CP maps}
We consider identity-preserving CP maps $\Upsilon$, acting between (subalgebras of) $\widetilde\caA$, i.e.\ there is a spatial structure. 
We say that $\Upsilon$ preserves almost locality if there exists $m\in\caM$ such that,
\begin{equation} \label{eq: locality property sigma}
||\Upsilon(O)- \bbE_{(X_r)^c}\circ \Upsilon(O)  ||  \leq || O || |X | m(r), \qquad    \forall O \in \caA_X, \forall X \subset \Gamma,
\end{equation}
where $\bbE_{(X_r)^c}$ is the conditional expectation using the tracial state on $(X_r)^c$, as discussed above. 
A useful property is that, if $\Upsilon,\Upsilon'$ preserve almost locality (say, with functions $m_1,m_2$) then so does $\Upsilon'\circ\Upsilon$. Indeed, for $O\in \caA_X$, 
\begin{align}
 \Upsilon'\circ\Upsilon(O) &= \Upsilon'\left( \bbE_{(X_{r/2})^c}\circ\Upsilon(O) +E_1 
 \right) \\   
 &=  \bbE_{(X_{r})^c} \circ \Upsilon' \circ \bbE_{(X_{r/2})^c}\circ \Upsilon(O) +E_2 +\Upsilon'(E_1)  \\ 
 & =\bbE_{(X_{r})^c} \circ\Upsilon'\circ\Upsilon(O) -\bbE_{(X_{r})^c} \circ\Upsilon'(E_1)+ E_2 +\Upsilon'(E_1) 
\end{align}
Using that $\Upsilon,\Upsilon'$ are norm-contracting and almost locality preserving, and that the  image of $\bbE_{(X_{r/2})^c}$ is in $\caA_{X_{r/2}}$, the error terms $E_{1,2}$ are bounded as 
$$||E_1 || \leq || O || |X | m_1(r/2)),\qquad ||E_2 || \leq || O || |X_{r/2}| m_2(r/2)).   $$
Since $|X_{r/2}| \leq C|X|r^d$ and also $\bbE_{(X_{r})^c}$ is norm-contracting, we verify that the overall error term is of the form $||O|| m(r)|X| $, and hence $\Upsilon'\circ\Upsilon$ indeed preserves almost locality. 

If an identity-preserving  CP map $\Upsilon$ preserves almost locality, then its adjoint $\Upsilon^*$,
satisfies the following property, for any density matrix $\rho$, 
\begin{equation}\label{eq: duality relation}
|\Upsilon^*(\rho)-\Upsilon^*\circ \bbE^*_{(X_r)^c}(\rho) |_X \leq |X|m(r)
\end{equation} 
with $\bbE^*_{(X_r)^c}(\rho)=\tr_{(X_r)^c}\rho\otimes \iota_{(X_r)^c}$
This follows from \eqref{eq: locality property sigma} by duality, using  \eqref{eq: adjoint cond exp}.

\subsection{Verification of properties 1) and 2) of definition \ref{def: locality stitch}}
We study the the trace-preserving map $\Sigma=\Sigma_Z$, defined in Proposition \ref{lem: stitching invertible}, via its adjoint $\Sigma^*$, an identity-preserving CP map.
We decompose $\Sigma^*=\Upsilon_1 \circ  \ldots \circ  \Upsilon_6  $ with $\Upsilon_i$ the identity-preserving CP maps given below.  
 \begin{itemize}
\item[$\Upsilon_6$:] $\caA\to \widetilde\caA: O\mapsto O \otimes\aux \ident_{\caA'} $
\item[$\Upsilon_5$:] $\widetilde\caA\to \widetilde\caA: O\mapsto VO V^* $
\item[$\Upsilon_4$:] $\widetilde\caA\to \widetilde\caA_{Z^c}: O\mapsto  \bbE_{ \kappa} (O) $ with the density matrix $\kappa= \tr_{Z^c} (V\mu\otimes\aux\mu'V^*)$ on $\caH_Z$.
\item[$\Upsilon_3$:] $\widetilde\caA_{Z^c}\to \widetilde\caA: O\mapsto  \ident_{Z}\otimes O $
\item[$\Upsilon_2$:] $\widetilde\caA\to \widetilde\caA: O\mapsto V^*O V $
\item[$\Upsilon_1$:] $\widetilde\caA\to \caA: O\mapsto  \bbE_{\mu'}( O) $ with the density matrix $\mu'$ on $\caH'$.
\end{itemize}
Each of the maps $\Upsilon_j$ preserves almost locality, i.e.\ satisfies \eqref{eq: locality property sigma}. This is immediate for $\Upsilon_i$ with $i=1,3,4,6$, whereas for $i=2,5$ it is a consequence of the bound \eqref{eq: lr} in Lemma \ref{lem:splitting2}.  Hence, by duality, we obtain that the locality property \eqref{eq: duality relation} is satisfied for $\Upsilon^*=\Sigma$, the map from Proposition \ref{lem: stitching invertible}. 
We then write, for any density matrices $\rho,\omega$,
\begin{align}
\Sigma(\rho)-\Sigma(\omega)=  \left(\Sigma(\rho)-\Sigma\circ\bbE^{*}_{(X_r)^c}(\rho)\right)- \left(\Sigma(\omega)-\Sigma\circ\bbE^{*}_{(X_r)^c}(\omega)\right) +  \Sigma\circ\bbE^{*}_{(X_r)^c}(\rho-\omega)
\end{align}
The first two terms on the right-hand side are bounded in $|\cdot|_X$-norm by \eqref{eq: duality relation}  with  $\Upsilon^*=\Sigma$.  The last term is bounded as 
$$
|\Sigma\circ\bbE^{*}_{(X_r)^c}(\rho-\omega)|_X \leq | \tr_{(X_r)^c}(\rho-\omega)\otimes \iota_{(X_r)^c}|_X \leq  |\rho-\omega|_{X_r}
$$
where we used in particular contractivity of $\Sigma$. This yields property 2) of definition \ref{def: locality stitch}.  
To check property 1) of definition \ref{def: locality stitch}, we compare the action of $\Upsilon_2,\Upsilon_5$ on $\caA_X$ with $X$ far from $\partial Z$ with the action of the identity map.  By the bound \eqref{eq: locality of vz} in Lemma \ref{lem:splitting2}, we have,
$$
||\Upsilon_j(O)-O || \leq ||O||m(\dist(X,\partial Z))|X|, \qquad O \in \caA_X, \qquad j=2,5.
$$
The verification of property 1) now follows by similar reasoning as for property 2).

\section{Appendix: Locality estimates}\label{sec: locality estimates}
We review the standard propagation bounds that are necessary for the proof of Lemma \ref{lem:splitting2}.  These bounds go back to 
\cite{Lieb:1972ts} but we use the recent formulation in \cite{nachtergaele2019quasi}. 
There are a few differences in our treatment compared to \cite{nachtergaele2019quasi}.  Most importantly, we have an underlying finite graph $\Gamma$ equipped with the graph distance, and we always assume that this graph has a finite spatial dimension $d$ as defined in Section \ref{sec: spatial structure}.

\subsection{Norms on interactions}
In the main text, we defined a family of norms $||\cdot||_m$ with $m\in\caM$. 
In \cite{nachtergaele2019quasi}, as in many preceding works, another family of norms is used, which we introduce now.
Namely, one defines so-called $F$-functions $F$ as functions satisfying
\begin{enumerate}
\item $F:\bbN\to\bbR^+$ is non-increasing.
\item There is  $C_F<\infty$ such that, for all $x,y\in\Gamma$, 
$$\sum_{z\in\Gamma}
F(\dist(x, z))F(\dist(z, y)) \leq C_F F(\dist(x, y)).
$$
\item There is $C'_F$ such that, for each ${x\in\Gamma}$, $ \sum_{y\in\Gamma}F(\dist(x,y)) \leq C_F'$.
\end{enumerate} 
Strictly speaking, the latter two properties are of course empty for a finite graph $\Gamma$, but we have in mind that these properties holds for constants $C_F, C_F'$ that can be chosen uniform in $|\Gamma|$. This is consistent with the conventions used in the main text, as explained in Section \ref{sec: result}. 

Then we can define a family of corresponding norms on interactions.
$$
||| z |||_F=\sup_{x,y\in \Gamma}  \sum_{S \supset \{x,y \}}\frac{||z_S||}{F(\dist(x,y))}
$$
(Note that we use the symbol $|||\cdot |||_F$ to distinghuish these norms from the norms $||\cdot||_m$  in the main text). 
If we work with functions decaying faster than polynomials, one can switch back and forth between the norms $|||\cdot|||_F$ and $||\cdot||_m$, as we state in the next lemma.
In stating the comparison, we write $F\in\caM$ meaning that $F\big|_{\bbN^+}\in\caM$ (The functions $m$ in the main text are defined on $\bbN^+=\{1,2,\ldots\}$ whereas the functions $F$ are defined on $\bbN=\{0,1,2,\ldots\}$.)
\begin{lemma}\label{lem: comparison of norms}
For any $m\in\caM$, there is an $F$-function $F$ in $\caM$ such that 
$$
||| z |||_F \leq || z ||_m
$$
For any $F$-function $F\in\caM$, there is a $m\in\caM$ such that
$$
|| z ||_m \leq ||| z |||_F 
$$
\end{lemma}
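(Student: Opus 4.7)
The plan is to verify each inequality separately by direct manipulation of the definitions, each time constructing the weight on the right-hand side in terms of the one on the left, and using the elementary relation between $\diam(S)$ and $\dist(x,y)$ for $\{x,y\}\subset S$.

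For the first inequality, given $m\in\caM$, the strategy is to choose an $F$-function $F\in\caM$ satisfying $F(k)\geq m(1+k)$ for all $k\geq 0$. Once such an $F$ is in hand, monotonicity of $m$ together with $\diam(S)\geq \dist(x,y)$ for $S\supset\{x,y\}$ yield $F(\dist(x,y))\geq m(1+\dist(x,y))\geq m(1+\diam(S))$, whence
\begin{equation*}
\sum_{S\supset\{x,y\}}\frac{||z_S||}{F(\dist(x,y))} \leq \sum_{S\supset\{x,y\}}\frac{||z_S||}{m(1+\diam(S))} \leq \sum_{S\ni x}\frac{||z_S||}{m(1+\diam(S))} \leq ||z||_m,
\end{equation*}
and taking the supremum over $(x,y)$ yields $|||z|||_F\leq ||z||_m$. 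A natural construction is $F(k):=C(1+k)^{-(d+1)}h(k)$, where $h$ is a quasi-exponentially decaying function chosen so that $h(k)\geq (1+k)^{d+1}m(1+k)/C$: the polynomial factor supplies summability and enables the standard reproducing-style convolution estimate for $F$-functions, while $h$ inherits membership in $\caM$ from $m$.

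For the second inequality, given an $F$-function $F\in\caM$, I will construct $m\in\caM$ decaying slightly slower than $F$. The key geometric input is: for each $S\ni x$ with $|S|\geq 2$, picking $u,v\in S$ with $\dist(u,v)=\diam(S)$, the triangle inequality $\dist(x,u)+\dist(x,v)\geq \diam(S)$ forces one of them, call it $y(S)$, to satisfy $\dist(x,y(S))\geq \diam(S)/2$. Then by monotonicity $m(1+\diam(S))\geq m(1+2\dist(x,y(S)))$, and overestimating the single term indexed by $y(S)$ by a sum over all $y\in\Gamma$ (noting that the $y=y(S)$ summand recovers the original term) yields
\begin{equation*}
\sum_{S\ni x}\frac{||z_S||}{m(1+\diam(S))} \leq \sum_{y\in\Gamma}\sum_{S\supset\{x,y\}}\frac{||z_S||}{m(1+2\dist(x,y))} \leq |||z|||_F\cdot \sum_{y\in\Gamma}\frac{F(\dist(x,y))}{m(1+2\dist(x,y))},
\end{equation*}
where the last inequality uses $\sum_{S\supset\{x,y\}}||z_S||\leq F(\dist(x,y))\,|||z|||_F$ from the definition of $|||z|||_F$. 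It remains to choose $m\in\caM$ so that the final sum is bounded uniformly in $x$ by $1$: taking $m(r):=c\,F(\lfloor(r-1)/2\rfloor)^{1/2}$ for a small enough constant $c>0$ gives $m(1+2k)=c\,F(k)^{1/2}$, hence $F(\dist(x,y))/m(1+2\dist(x,y))=F(\dist(x,y))^{1/2}/c$, and $\sum_y F(\dist(x,y))^{1/2}$ is uniformly bounded in $x$ because $F\in\caM$ makes $(1+k)^{d-1}F(k)^{1/2}$ summable.

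The main obstacle is the explicit construction of $F$ in the first direction: $F$ must simultaneously dominate $m(1+\cdot)$, lie in $\caM$, and be an $F$-function with constants independent of $|\Gamma|$. These three constraints are compatible because the class $\caM$ leaves room between stretched-exponential and near-exponential decay, so a polynomial pre-factor can always be inserted while preserving the essential decay rate of $m$; the convolution property then follows from the standard triangle-inequality splitting and reproducing-constant argument, as reviewed in \cite{nachtergaele2019quasi}. The second direction is then routine once the geometric observation on $y(S)$ is noted, and the bounded-factor rescaling $r\mapsto (r-1)/2$ together with taking a square root clearly preserves membership in $\caM$.
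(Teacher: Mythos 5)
Your overall plan tracks the paper's proof closely, and both directions use the same basic inequalities. In particular, the second direction is essentially correct: the observation that a set $S\ni x$ of diameter $k$ must contain a point $y(S)$ with $\dist(x,y(S))\geq k/2$ is the same geometric input the paper uses (the paper sums over all pairs $y,z\in S$ realizing the diameter, you pick one point; this only changes bookkeeping). Your explicit choice $m(r)=c\,F(\lfloor(r-1)/2\rfloor)^{1/2}$ does produce a member of $\caM$ and does make $\sum_y F(\dist(x,y))/m(1+2\dist(x,y))$ uniformly bounded, but note that $c$ must be taken \emph{large} (at least $\sup_x \sum_y F(\dist(x,y))^{1/2}$), not small, so that the residual factor is $\leq 1$; this is a slip but not a structural problem.

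The genuine gap is in the first direction. Your inequality chain
\[
|||z|||_F \;\leq\; \sup_{x}\sum_{S\ni x}\frac{\|z_S\|}{F(\diam S)} \;\leq\; \|z\|_m
\]
is fine once an $F$-function $F\in\caM$ with $F(k)\geq m(1+k)$ is available, and this matches the paper. But the construction you offer, $F(k)=C(1+k)^{-(d+1)}h(k)$ with $h$ any quasi-exponentially decaying majorant of $(1+k)^{d+1}m(1+k)/C$, does \emph{not} in general yield an $F$-function. The polynomial prefactor handles summability (property 3), but the convolution property
\[
\sum_{z}F(\dist(x,z))F(\dist(z,y))\leq C_F\,F(\dist(x,y))
\]
via the triangle-inequality splitting additionally requires $h(a)h(b)\leq h(a+b)$, i.e.\ logarithmic superadditivity of $h$; without it, $h(\dist(x,z))h(\dist(z,y))$ cannot be absorbed into $h(\dist(x,y))$. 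A generic non-increasing, quasi-exponentially decaying $m$ (hence $h$) need not have this property, so ``the convolution property then follows'' is not justified. This is exactly the point the paper addresses with a dedicated lemma: it takes the logarithmically superadditive hull $\superadd(f')$ of a rescaled $f'$ (Bruckner's transformation), verifies that the hull stays in $\caM$ by comparing against the explicitly log-superadditive comparison functions $e^{-(c_\alpha r)^\alpha}$, and only then divides out the polynomial to get an $F$-function. Your proof needs that step inserted; as written it asserts but does not establish that the required $F$ exists.
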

To prove this, we need the following observation
\begin{lemma}\label{lem: acro with functions}
Let $f:\bbN^+\to\bbR^+$ be a function such that $\lim_{r\to\infty}e^{r^{\alpha}}f(r)=0$ for any $0<\alpha <1$, then there is an $F$-function $F\in\caM$ such that $f(r)\leq F(r)$ for any $r\in\bbN^+$.
\end{lemma}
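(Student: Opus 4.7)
Plan: I will construct $F$ in the form $F(r) = A(1+r)^{-\nu} e^{-\phi(r)}$ with $\nu > d$, $A \geq 1$ a constant absorbing the behavior of $f$ on any finite range, and $\phi : \bbN^+ \to [0,\infty)$ a subadditive, non-decreasing function. The polynomial factor $(1+r)^{-\nu}$ is itself an $F$-function, since $\sum_y (1+\dist(x,y))^{-\nu} < \infty$ for $\nu > d$ by finite-dimensionality of $\Gamma$, and the convolution bound $\sum_z (1+\dist(x,z))^{-\nu} (1+\dist(z,y))^{-\nu} \leq C (1+\dist(x,y))^{-\nu}$ follows by splitting the sum at $\dist(x,y)/2$. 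Subadditivity and monotonicity of $\phi$, together with the triangle inequality, give $\phi(\dist(x,z)) + \phi(\dist(z,y)) \geq \phi(\dist(x,z) + \dist(z,y)) \geq \phi(\dist(x,y))$; multiplying the polynomial convolution bound by $e^{-\phi(\dist(x,y))}$ upgrades it to the $F$-function convolution property for $F$.

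The function $\phi$ is built from $f$ as follows. First set $\phi_1(r) := -\log \sup_{s \geq r} \min\{(1+s)^\nu f(s),\, 1/2\}$, which is non-decreasing, bounded below by $\log 2$, and still satisfies $\phi_1(r)/r^\alpha \to \infty$ for every $\alpha < 1$ (multiplication by the polynomial $(1+s)^\nu$ and truncation at $1/2$ preserve the hypothesis, and taking $\sup_{s\geq r}$ preserves the decay rate). Then take the subadditive envelope $\phi(r) := \inf\{\sum_{i=1}^k \phi_1(r_i) : k \in \bbN^+,\, r_i \in \bbN^+,\, \sum r_i = r\}$. By construction $\phi \leq \phi_1$ and $\phi$ is subadditive; it is also non-decreasing, since any partition of $r' > r$ can be reduced to one of $r$ by shrinking a single part, and $\phi_1$ is non-decreasing.

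The technical core of the proof is showing $\phi(r)/r^\alpha \to \infty$ for every $\alpha < 1$. Fix $\alpha$ and $M > 0$; by hypothesis there is $R = R(M,\alpha)$ with $\phi_1(s) \geq M s^\alpha$ for $s \geq R$. For any partition $r = \sum r_i$, split the indices into \emph{large} ($r_i \geq R$) and \emph{small} ($r_i < R$). If the large pieces sum to at least $r/2$, then using the elementary inequality $\sum x_i^\alpha \geq (\sum x_i)^\alpha$ for $\alpha \in (0,1]$, one has $\sum_i \phi_1(r_i) \geq M \sum_{r_i \geq R} r_i^\alpha \geq M(r/2)^\alpha$. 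Otherwise the small pieces sum to more than $r/2$, so there are more than $r/(2R)$ of them, each contributing $\phi_1 \geq \log 2$; this produces a lower bound linear in $r$ that exceeds $M(r/2)^\alpha$ once $r$ is large. Taking infimum over partitions, $\phi(r) \geq M(r/2)^\alpha$ for $r$ large, and since $M$ was arbitrary, $\phi(r)/r^\alpha \to \infty$.

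Finally, choosing $A$ large enough makes $F(r) \geq f(r)$ on the finite set where $(1+r)^\nu f(r) > 1/2$; on the complement $F(r) \geq A(1+r)^{-\nu} e^{-\phi_1(r)} = A(1+r)^{-\nu}(1+r)^\nu f(r) = Af(r) \geq f(r)$. The condition $F \in \caM$ follows from $F(r) \leq A e^{-\phi(r)}$ together with the decay of $\phi$. The main obstacle is this decay estimate for the subadditive envelope: the hypothesis is needed in its full strength (faster-than-every-quasi-exponential decay), since otherwise the envelope could collapse to linear growth and leave $F$ with only exponential decay, which is insufficient to dominate an $f$ that decays merely like $e^{-r/\log r}$.
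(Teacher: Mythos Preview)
Your proof is correct and follows essentially the same idea as the paper: both construct $F$ as a polynomial weight times a logarithmically superadditive function obtained by an envelope construction over integer partitions. Your subadditive envelope $\phi(r)=\inf\{\sum_i\phi_1(r_i)\}$ is exactly the additive form of the paper's transformation $\superadd(f')(r)=\sup\{\prod_i f'(r_i)\}$, related by $-\log$; your truncation $\min\{\cdot,1/2\}$ plays the role of the paper's scaling constant $c_0$ ensuring $f'<1$. The one substantive difference is in the decay estimate: where you carry out a direct large/small-piece dichotomy on partitions, the paper observes that $f_\alpha(r)=e^{-(c_\alpha r)^\alpha}$ is itself logarithmically superadditive (since $r_1^\alpha+r_2^\alpha\geq(r_1+r_2)^\alpha$ for $\alpha\leq 1$) and hence a fixed point of $\superadd$, so $f'\leq f_\alpha$ immediately gives $\superadd(f')\leq\superadd(f_\alpha)=f_\alpha$. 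This fixed-point observation replaces your partition argument in one line and is worth knowing.
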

\begin{proof}
We say a function $f:\bbN^+\to\bbR^+$ is logarithmically {superadditive} if 
\begin{equation}\label{eq: log subadditive}
f(r_1) f (r_2)\leq  f (r_1+r_2)\qquad r_1,r_2 \in \bbN^+.
\end{equation}
Following \cite{bruckner1960minimal}, we define the following transformation of functions $f:
\bbN^+\to \bbR^+$:
$$\superadd(f)(r)= \sup_{\ell\in\bbN^+} \,\mathop{\sup}\limits_{(r_1,\ldots,r_\ell) \in (\bbN^+)^\ell: \sum_i r_i=r }  \, \prod_{i} f(r_i)
$$
This transformation has the following properties
\begin{enumerate}
\item $\superadd(f)$ is logarithmically superadditive.
\item If $f_1\leq f_2$, then $\superadd(f_1)\leq \superadd(f_2)$. In particular, $f \leq \superadd(f)$.
\item If $f$ is logarithmically superadditive, then $\superadd(f)=f$.  
\end{enumerate}
Let now $f$ satisfy the assumption of the lemma and put $f'(r)=c_0(1+r^{d+2})f(r)$ with $c_0$ such that $f'<1$. Then, we put
$$
F(r)=  \frac{1}{c_0(1+r^{d+2})} \superadd(f'(r)), \qquad r\in \bbN^+,
$$
and $F(0)=F(1)$.
Then $F$ is an $F$-function (such properties are extensively discussed in \cite{nachtergaele2019quasi}) and, by property 2) above, $F(r)\geq f(r)$. 
It remains to show that $F$ has sufficiently rapid decay. Since $f'<1$ and $f'e^{r^{\alpha}}\to 0$, we can find $c_\alpha>0$ such that $f'(r)\leq f_\alpha(r)=e^{-(c_\alpha r)^{\alpha}}$. Since $f_\alpha$, is logarithmically superadditive, we have 
$$
\superadd(f') \leq   \superadd(f_\alpha)=  f_\alpha.
$$
Therefore, $\superadd(f')$ and hence $F(r)$ have indeed sufficiently rapid decay and we conclude that $F\in\caM$.
\end{proof}

\begin{proof}[Proof of Lemma \ref{lem: comparison of norms}]

To get the first claim we observe
\begin{align}
||| z |||_F  & = \sup_{x,y} \frac{1}{F(\dist(x,y))} \sum_{S: S \supset \{x,y\} }  ||z_S|| \\
& \leq \sup_{x,y}  \sum_{S: S \supset \{x,y\} }   \frac{||z_S|| }{F(\diam(S))}  \\
& \leq \sup_{x}  \sum_{S: S \ni x }   \frac{||z_S||}{F(\diam(S))} 
\end{align}
In order to have the last expression dominated by $||z||_m$, we need to choose the $F$-function $F$ such that  $F(r) \geq m(r+1)$.  The possibility of doing this was argued in Lemma  \ref{lem: acro with functions}.
For the second claim, we write
\begin{align*}
|| z ||_m  & =\sup_{x} \sum_{S: S \ni x}  \frac{1}{m(\diam(S)+1)} ||z_S||  \\
 & = \sup_{x} \sum_{k=0} \, \sum_{S: S \ni x, \diam(S)=k} \,  \frac{1}{m(k+1)} ||z_S||  \\
 & \leq \sup_{x} \sum_{k=0} \, \sum_{y,z: \dist(y,x)\leq k,  \dist(y,z)= k } \,   \frac{1}{m(k+1)} \sum_{S: S \ni \{y,z\}}    ||z_S||    \\
  & \leq  \sum_{k=0} \frac{(1+C_{\Gamma}k^d)^2}{m(k+1)} \, \sup_{y,z: \dist(y,z)= k} \,   \sum_{S: S \ni \{y,z\}}   ||z_S||  \\
  & =  \sum_{k=0} \frac{1}{1+k^2}     \frac{1}{f(k+1)}  \,  \sup_{y,z: \dist(y,z)= k}  \, \sum_{S: S \ni \{y,z\}} ||z_S||, \qquad f(k+1)=  \frac{m(k+1)}{(1+k^2) (1+C_{\Gamma}k^d)^2} \\
        & \leq  C_0  \sup_{y,z}     \frac{1}{f(\dist(y,z)+1)}      \sum_{S: S \ni \{y,z\}}  ||z_S||, \qquad C_0=\sum_{k=0} \frac{1}{1+k^2} 
\end{align*}

In order to have the last expression dominated by $|||z|||_F$, we need to choose $m \in \caM$ such that, for all $r\in\bbN^+$,  $f(r)\geq C_0 F(r-1)$, and hence, 
$$
 m(r) \geq g(r)= C_0(1+(r-1)^2) (1+C_{\Gamma}(r-1)^d)^2 F(r-1).
$$
The function $g$ is not necessarily non-increasing, which is remedied by setting
$$
m(r)= \sup_{ r' \geq r}   g(r').
$$ 
We can now check that $m \in\caM$ as it inherits the rapid decay from $F$.
\end{proof}

We call $\mathfrak{B}_F$ the set of interactions with finite $||| \cdot|||_F$-norm and 
$\mathfrak{B}_F(I)$, with $I\subset\bbR$ a (possibly infinite) interval, the set of time-dependent interactions $I\ni s\mapsto z(s)$, with finite norm 
$$
|||z|||_F=\sup_{s\in I}   |||z(s)|||_F
$$
and such that $s\mapsto z_S(s)$ is measurable for all $S\subset\Gamma$. 
Pointwise addition of interactions $(z_1+z_2)_S(s)=(z_1)_S(s)+(z_2)_S(s)$ makes $\mathfrak{B}_F$ and  $\mathfrak{B}_F(I)$ into Banach spaces w.r.t.\ the norm  $|||\cdot|||_F$.

\subsection{Locally generated automorphisms}
Let $z \in \mathfrak{B}_F(I)$ and let $s,t \in I$ with $s\leq t$. We let $\alpha_z(t,s)[\cdot]$ be the dynamics (a family of automorphisms  $\caA\to\caA$ or $\widetilde\caA\to\widetilde\caA$ ) generated by the family $z(\cdot)$ acting from time $s$ to time $t$, in the sense that 
$$
\alpha_z(t,s)[A]=A+i\int_s^t du \,  \alpha_z(u,s)([H_z(u),A]),\qquad H_z(u)= \sum_{S \subset\Gamma} z_S(u).
$$
Since $\Gamma$ is finite, the existence and uniqueness of this dynamics follows from elementary facts on matrix-valued ODE's.  For $s=0$, one often abbreviates
$$
\alpha_z(t)=\alpha_z(t,0)
$$
as we did in the main text.  We first state a version of the Lieb-Robinson bound, using the language introduced in Section \ref{sec: prelim}.
\begin{theorem}[Lieb-Robinson bound] \label{thm: LR}
Let $z\in \mathfrak{B}_F(I)$, and let $s,t \in I$ with $s\leq t$.
Let $A\in\caA_{X_A}, B\in\caA_{X_B} $ with $X_A \cap X_B=\emptyset$. Then 
$$
||[\alpha_z(t,s)[A],B] || \leq  \frac{ ||A || ||B ||}{C_F}  e^{C_F |t-s| |||z|||_F}    \left( \sum_{x\in X_A,y\in X_B} F(\dist(x,y))\right)
$$
\end{theorem}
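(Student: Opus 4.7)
The plan is to prove this by the classical Lieb--Robinson iteration in the $F$-function language of \cite{nachtergaele2019quasi}. The strategy has three parts: a truncated-dynamics Duhamel identity that produces a Volterra-type integral inequality, iteration of that inequality into a Dyson-like series indexed by chains of interaction supports linking $X_A$ and $X_B$, and the $F$-function convolution property which collapses each chain into a single spatial factor $F(\dist(x,y))$.

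First I would introduce the truncated interaction $\tilde z$ defined by $\tilde z_S(u)=z_S(u)$ when $S\cap X_B=\emptyset$ and $\tilde z_S(u)=0$ otherwise. Because $H_{\tilde z}(u)$ has no term touching $X_B$, the automorphism $\alpha_{\tilde z}(t,s)$ it generates preserves $\caA_{X_B^c}$; combined with $X_A\cap X_B=\emptyset$ this gives $[\alpha_{\tilde z}(t,s)[A],B]=0$. A Duhamel-type identity comparing $\alpha_z$ with $\alpha_{\tilde z}$ then produces
$$[\alpha_z(t,s)[A],B]=i\int_s^t du\,\Bigl[\alpha_{\tilde z}(u,s)\bigl[[(H_z-H_{\tilde z})(u),\,\alpha_z(t,u)[A]]\bigr],\,B\Bigr],$$
in which only the excised terms, namely those $z_S$ with $S\cap X_B\neq\emptyset$, appear in $H_z-H_{\tilde z}$.

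Taking norms and using isometry of the automorphisms together with the Leibniz rule on the inner commutator produces an integral inequality for the normalized commutator $f(X_A,t):=\|[\alpha_z(t,s)[A],B]\|/(\|A\|\|B\|)$ of the schematic form $f(X_A,t)\leq 2\int_s^t du\sum_{S:\,S\cap X_B\neq\emptyset}\|z_S(u)\|\,g(S,X_A,u)$, where $g$ is a commutator of the evolved $A$ with an operator supported on $S$. Applying the same truncation argument iteratively, now inserting fresh truncations that peel off one interaction support at a time, generates after $n$ steps a sum over chains $(S_1,\dots,S_n)$ satisfying $S_1\cap X_A\neq\emptyset$, $S_k\cap S_{k+1}\neq\emptyset$ for each $k$, and $S_n\cap X_B\neq\emptyset$, with weight $\prod_k\|z_{S_k}(u_k)\|$ and a factor $(t-s)^n/n!$ from the nested time integrals (absorbing numerical factors into $C_F$).

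Bounding each $\|z_{S_k}(u_k)\|\leq |||z|||_F\,F(\dist(x_k,y_k))$ for appropriate sites $x_k,y_k\in S_k$ and iterating the convolution estimate $\sum_z F(\dist(x,z))F(\dist(z,y))\leq C_F F(\dist(x,y))$ collapses each chain into a single factor $F(\dist(x,y))$ with $x\in X_A,\,y\in X_B$, at cost $C_F^{n-1}$. Summing the resulting Dyson series then produces $\tfrac{1}{C_F}\bigl(e^{C_F|t-s|\,|||z|||_F}-1\bigr)\sum_{x\in X_A,y\in X_B}F(\dist(x,y))$, which is dominated by the right-hand side of the theorem after restoring the factor $\|A\|\|B\|$. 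The main obstacle is the combinatorial bookkeeping in the iteration: one must verify that successive applications of the $F$-convolution consume the chain sums exactly, producing no overhead beyond the $C_F^{n-1}$ that gets absorbed into the exponential, and that the nested time integrals together with the Leibniz expansions yield a clean $1/n!$ so that the Dyson series converges to the stated exponential.
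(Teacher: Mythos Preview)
Your sketch is correct and follows the classical Lieb--Robinson iteration (truncated dynamics, Duhamel comparison, chain expansion, $F$-convolution collapse). However, the paper does not actually prove this theorem: it simply cites it as (a slightly weakened form of) Theorem~3.1 in \cite{nachtergaele2019quasi}, noting only that the two cosmetic differences in the present setting---a finite graph $\Gamma$ and merely measurable time-dependence $t\mapsto z_S(t)$---do not affect the proof there. So you have supplied a genuine proof outline where the paper defers to the literature; the approach you describe is precisely the one carried out in the cited reference.
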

This is a slightly weakened form of Theorem 3.1 in \cite{nachtergaele2019quasi}, except for two details.  Our graph $\Gamma$ is finite and we allow our time-dependent interaction $z$ to be composed of measurable functions $t\mapsto z_S(t)$ for any $S$. Inspection of the proof in \cite{nachtergaele2019quasi}  confirms that this is all what is needed.

The Lieb-Robinson bound does not directly address the question how to write the evolved observable $\alpha_z(t,s)[A]$ as a sum of local terms.  A natural way to do this is to use the  conditional expectation $\bbE_{X}$ as defined in Section \ref{sec: stitching maps}. If $A\in \caA_X$, then we can write
\begin{equation}\label{eq: decomposition into local}
\alpha_z(t,s)[A]=\sum_{k=0}^{\infty} \Delta_{X_k}(\alpha_z(t,s)[A])
\end{equation}
where the maps $\Delta_{X_k}$ are defined as
$$
\Delta_{X_k} = \begin{cases} \bbE_{X^c} &  k=0\\ \bbE_{(X_k)^c}-\bbE_{(X_{k-1})^c}   & k>0 \end{cases}
$$
and we recall that $X_k=\{\dist(X,\cdot) \leq k\}$. We see that the terms corresponding to $k$ in the sum in \eqref{eq: decomposition into local} are supported on the fattened sets $X_k$. 
The Lieb-Robinson bound can then be used to provide bounds on the right-hand side of \eqref{eq: decomposition into local} via the following standard lemma, whose proof we omit.
\begin{lemma}\label{lem: com and norm}
For any $A\in\caA$, 
$$||A-\bbE_{Z^c}[A]||\leq \sup_{O\in \caA_{Z^c}, ||O||=1} ||[O,A]||.$$
 \end{lemma}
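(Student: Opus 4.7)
The plan is to realize the conditional expectation $\bbE_{Z^c}$ as a group average (a ``twirl'') over unitaries supported in $Z^c$, and then bound $A - \bbE_{Z^c}[A]$ by a suitable commutator. Let $d = \dim \caH_{Z^c}$ and let $dU$ denote normalized Haar measure on the unitary group $\mathcal{U}(\caH_{Z^c}) \subset \caA_{Z^c}$. The standard identity
$$\int_{\mathcal{U}(\caH_{Z^c})} U X U^* \, dU = \frac{\tr X}{d}\, \ident_{Z^c}, \qquad X \in \caA_{Z^c},$$
combined with the decomposition $A = \sum_i A_i \otimes B_i$ with $A_i \in \caA_Z$ and $B_i \in \caA_{Z^c}$ (and viewing each such $U$ as acting trivially on $\caA_Z$), yields
$$\int_{\mathcal{U}(\caH_{Z^c})} U A U^* \, dU = \frac{1}{d} \tr_{Z^c}(A) \otimes \ident_{Z^c} = \bbE_{Z^c}[A].$$

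With this representation in hand, the bound is immediate. Using the elementary identity $A - U A U^* = [A, U]\, U^*$, I would write
$$A - \bbE_{Z^c}[A] = \int_{\mathcal{U}(\caH_{Z^c})} (A - U A U^*)\, dU = \int_{\mathcal{U}(\caH_{Z^c})} [A, U]\, U^* \, dU.$$
Pulling the norm inside the Haar integral (triangle inequality together with $\int dU = 1$) and using $\|U^*\| = 1$ gives
$$\|A - \bbE_{Z^c}[A]\| \leq \sup_{U \in \mathcal{U}(\caH_{Z^c})} \|[A, U]\| \leq \sup_{O \in \caA_{Z^c},\, \|O\|=1} \|[O, A]\|,$$
where the last step uses that unitaries lie in the unit ball of $\caA_{Z^c}$ and $\|[A, O]\| = \|[O, A]\|$. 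There is no serious obstacle: the only nontrivial input is the identification of $\bbE_{Z^c}$ with the unitary twirl, which is a direct consequence of invariance under the action $X \mapsto U X U^*$ on $\caA_{Z^c}$.
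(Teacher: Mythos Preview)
Your proof is correct: the Haar-twirl representation of the tracial conditional expectation is standard, and the subsequent estimate is clean. The paper itself omits the proof of this lemma (calling it ``standard''), so there is no alternative argument to compare against; your approach is exactly the kind of short argument the authors presumably had in mind.
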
 
%

\subsubsection{Evolution of interactions}

Let $g$ be an interaction and recall that we associate to it a Hamiltonian $H_g$
$$
H_g= \sum_{S \subset \Gamma} g_S 
$$
Let now $z=(z(s))_{s\in I}$ be as above, i.e.\ a time-dependent interaction, determining a family of autormorphisms $\alpha_z(t,s)$.  We can use these autormorphisms to obtain a new Hamiltonian by transforming $H_g$ into  $\alpha_z(t,s)[H_g] $.  It is very intuitive that this Hamiltonian can again be expressed as a sum of local terms with good decay properties in the size of the support, i.e.\ we can express the Hamiltonian as the Hamiltonian corresponding to an new interaction that we then view as a time-evolved interaction $\alpha_z(t,s)[g]$. A priori, there are multiple ways to define $\alpha_z(t,s)[g]$ since there are multiple ways to represent an operator as a sum of local terms. For convenience, we follow \cite{nachtergaele2019quasi} and set
$$
(\alpha_z(t,s)[g])_S=\sum_{X\subset \Gamma, k\in \bbN: X_{k}=S}   \Delta_{X_k}( \alpha_z(t,s)[g_X]), 
$$
using the operators $\Delta_{X_k}$ introduced above. 
 This definition has the following convenient properties
\begin{enumerate}
\item $H_{\alpha_z(t,s)[g]}=\alpha_z(t,s)[H_g]$.
\item  If $g$ is anchored in a set $\Gamma_g \subset\Gamma$, then $\alpha_z(t,s)[g]$ is also anchored in $\Gamma_g$. 
\item  If the $F$-functions $F_g,F_z$ are in $\caM$, then there is an $F$-function $F'\in\caM$ depending only on $F_g,F_z$, such that 
\begin{equation} \label{eq: bound evolved inter} ||| \alpha_z(t,s)[g]  |||_{F'}  \leq  e^{C_{F_z} |t-s| |||z|||_{F_z}}  ||| g |||_{F_g}. 
\end{equation} 
\end{enumerate}
The first two properties follow immediately from the definitions. The third property is derived starting from the Lieb-Robinsin bound. For details we refer to \cite{nachtergaele2019quasi}, where one can find explicit choices for $F'$.
%
%
%
%
%

This enables us to define many new time-dependent interactions in Lemma \ref{lem: evolved interactions}.
To set the stage, we first define the "commutator" of two interactions $z_1,z_2$ as
$$ (i[z_1,z_2])_S=\sum_{S_1,S_2: S_1\cup S_2=S, S_1\cap S_2 \neq \emptyset} i[z_{S_1}, z_{S_2} ]  
$$ 
We note that if either one of $z_1,z_2$ is anchored in a set $\Gamma_z$, then the commutator is also anchored in $\Gamma_z$.
Then we can establish
\begin{lemma}\label{lem: evolved interactions}
Let $z_0,z_1,z_2$ be time-dependent interactions in $\mathfrak{B}_{F_{0,1,2}}(I)$, respectively, with $F_{0,1,2}\in\caM$. Then the following interactions are in $\mathfrak{B}_F(I)$, with $F\in\caM$ and $F$ depending only on $F_{0}$ (for item 1)) and on $F_1,F_2$ (for items 2,3),
\begin{enumerate}
\item  $\bar{z}_0(s)=-\alpha_{z_0}(s)[z_0(s)]$  
\item     $i[z_1,z_2](s)$
\item  $  w(s)= {z_2}(s)-{z_1}(s)+ i (\alpha_{z_2}(s))^{-1}\left[\int_0^s \alpha_{{z_2}}(u)\Big\{\big[{z_2}(u) - z_1(u),\alpha_{z_1}(s,u)[z_1(s)]\big]\Big\}du \right] $
\end{enumerate}
The time-dependent interaction $\bar{z}_0$ in item 1) satisfies 
\begin{equation}\label{eq: inverse}
\alpha_{\bar{z}_0}(t,s)= (\alpha_{z_0}(t,s))^{-1}
\end{equation}
The time-dependent interaction $w$ in item 3) satisfies
\begin{equation}\label{eq: combined inverse}
\alpha_w(t,s)=(\alpha_{z_1}(t,s))^{-1} \circ \alpha_{z_2}(t,s)
\end{equation}
\end{lemma}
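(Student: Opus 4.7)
The plan is to handle the three items in order, leaning on \eqref{eq: bound evolved inter} and the $F$-function machinery as the main tools.

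For item 1, the interaction $\bar{z}_0(s) = -\alpha_{z_0}(s)[z_0(s)]$ is already defined through the evolution-of-interactions construction recalled just above the lemma, so \eqref{eq: bound evolved inter} applied pointwise in $s$ gives $|||\bar{z}_0(s)|||_{F'} \leq C |||z_0|||_{F_0}$ for some $F' \in \caM$ depending only on $F_0$; taking the supremum over $s \in I$ yields $\bar{z}_0 \in \mathfrak{B}_{F'}(I)$, and measurability in $s$ follows from the joint measurability of $(u, A) \mapsto \alpha_{z_0}(u, 0)[A]$. The identity \eqref{eq: inverse} is verified via a standard ODE manipulation: differentiating $\alpha_{z_0}(t, 0) \circ \alpha_{z_0}(t, 0)^{-1} = \mathrm{id}$ in $t$ yields $\partial_t \alpha_{z_0}(t, 0)^{-1}[A] = -i[H_{z_0}(t), \alpha_{z_0}(t, 0)^{-1}[A]]$. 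Since $\alpha_{z_0}(t, 0)$ is an automorphism, this rewrites as $i \alpha_{z_0}(t, 0)^{-1}[[H_{\bar{z}_0}(t), A]]$ with $H_{\bar{z}_0}(t) = -\alpha_{z_0}(t, 0)[H_{z_0}(t)]$, which is the Dyson equation defining $\alpha_{\bar{z}_0}(t, 0)$; uniqueness of the ODE solution together with the cocycle property close the argument.

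For item 2, measurability of $s \mapsto (i[z_1, z_2])_S(s)$ is immediate from the finite-sum definition. To bound $|||i[z_1, z_2]|||_F$, fix $x, y \in \Gamma$ and note that a pair $(S_1, S_2)$ contributes to $\sum_{S \ni \{x,y\}} ||(i[z_1, z_2])_S||$ only when $S_1 \cap S_2 \neq \emptyset$ and $\{x, y\} \subset S_1 \cup S_2$. Parametrize such pairs by a representative $z \in S_1 \cap S_2$ (at worst an overcount by $|S_1 \cap S_2|$) and split by which of $x, y$ lies in which of $S_1, S_2$. In the mixed case $x \in S_1, y \in S_2$ the bound factorizes into $|||z_1|||_{F_1}|||z_2|||_{F_2} F_1(\dist(x, z)) F_2(\dist(z, y))$, and summing over $z$ invokes the $F$-function semigroup property after dominating $F_1, F_2$ by a common $F$-function furnished by Lemma \ref{lem: acro with functions}. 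In the aligned case $x, y \in S_1$, I use the geometric-mean bound $\sum_{S_1 \ni \{x,y,z\}} ||z_{1, S_1}|| \leq |||z_1|||_{F_1} \sqrt{F_1(\dist(x, y)) F_1(\dist(x, z))}$ to retain decay both in $\dist(x, y)$ and in $z$; summing over $z$ and over $S_2 \ni z$ (using $\sum_{S_2 \ni z} ||z_{2, S_2}|| \leq F_2(0) |||z_2|||_{F_2}$) leaves a bound proportional to $\sqrt{F_1(\dist(x, y))}$, which is again dominated by an $F \in \caM$ by Lemma \ref{lem: acro with functions}. Combining the cases gives $|||i[z_1, z_2]|||_F \leq C|||z_1|||_{F_1}|||z_2|||_{F_2}$.

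For item 3, I first identify the generator of $\beta(t) := \alpha_{z_1}(t, 0)^{-1} \circ \alpha_{z_2}(t, 0)$. Direct differentiation yields $\partial_t \beta(t)[A] = i \beta(t)[[H_{z_2}(t) - \alpha_{z_2}(t, 0)^{-1} \alpha_{z_1}(t, 0)[H_{z_1}(t)], A]]$, so $\beta$ is the Dyson evolution with Hamiltonian $H_{z_2}(t) - \alpha_{z_2}(t, 0)^{-1} \alpha_{z_1}(t, 0)[H_{z_1}(t)]$. To reconcile this with the formula in the lemma, apply Duhamel to the auxiliary $\phi(u) := \alpha_{z_2}(u, 0) \alpha_{z_1}(u, 0)^{-1}$, which satisfies $\partial_u \phi(u)[A] = i \alpha_{z_2}(u, 0)[[H_{z_2}(u) - H_{z_1}(u), \alpha_{z_1}(u, 0)^{-1}[A]]]$, and evaluate at $A = \alpha_{z_1}(s, 0)[H_{z_1}(s)]$; the Heisenberg cocycle identity $\alpha_{z_1}(u, 0)^{-1} \circ \alpha_{z_1}(s, 0) = \alpha_{z_1}(s, u)$ then reproduces exactly the integral appearing in the lemma, establishing \eqref{eq: combined inverse}. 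The $F$-norm bound on $w$ is assembled from \eqref{eq: bound evolved inter} (for each $\alpha_{z_i}$), item 2 (for the inner commutator), item 1 (for the outer $(\alpha_{z_2}(s))^{-1}$), and a triangle inequality for the $u$-integral in the Banach space $\mathfrak{B}_F(I)$. The main obstacle is the aligned case of item 2: naive bounds blow up with $|\Gamma|$, and the geometric-mean device together with Lemma \ref{lem: acro with functions} is precisely what converts the quasi-exponential envelope $\sqrt{F_1}$ into a bona fide $F$-function in $\caM$; once item 2 is in hand, items 1 and 3 reduce to ODE manipulations and triangle inequalities.
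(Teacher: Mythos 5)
Your proposal is essentially correct and follows the same high-level route as the paper, which only gives a two-sentence sketch: measurability from that of $z_{0,1,2}$, item 1 directly from \eqref{eq: bound evolved inter}, identities by direct computation, and item 3 by combining the previous items together with the (unargued) fact that commutators map $\mathfrak{B}_F$ into some $\mathfrak{B}_{F'}$.

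The genuine value-add in your write-up is the explicit argument for item 2, which the paper simply asserts. Your decomposition into mixed and aligned cases, with the geometric-mean device $\sum_{S_1 \ni \{x,y,z\}} \|z_{1,S_1}\| \leq |||z_1|||_{F_1}\sqrt{F_1(\dist(x,y)) F_1(\dist(x,z))}$ to simultaneously retain summability in $z$ and decay in $\dist(x,y)$, is exactly the right idea: without it the aligned case would either lose the $(x,y)$-decay or produce a factor of $|\Gamma|$. Invoking Lemma~\ref{lem: acro with functions} to promote $\sqrt{F_1}$ to an $F$-function in $\caM$ is also the correct closing step. Your ODE derivations for both identities, including the Duhamel computation that reconstructs the displayed formula for $w$, check out. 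One small caveat, shared with the paper's formulation: the identities \eqref{eq: inverse} and \eqref{eq: combined inverse} with the interaction defined by $\bar z_0(s)=-\alpha_{z_0}(s,0)[z_0(s)]$ only hold literally at $s=0$; for $s>0$ the cocycle $\alpha_z(t,s)=\alpha_z(s,0)^{-1}\circ\alpha_z(t,0)$ produces a reversed composition and the stated equality generally fails. Since only $s=0$ is used in the application (Lemma~\ref{lem: auto form for v} and the proof of Lemma~\ref{lem:splitting2}), this does not affect the result, but the phrase "the cocycle property closes the argument" glosses over it; you may want to restrict the statement to $s=0$ or re-examine the extension.
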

\begin{proof}[Sketch of proof]
First, one verifies that the newly defined time-dependent interactions satisfy the measurability condition, using that $z_{0,1,2}$ do so.   Then, the bound on $\bar{z}_0$ follows directly from \eqref{eq: bound evolved inter}. The identities \eqref{eq: inverse} and \eqref{eq: combined inverse} follow by direct computation.  The bound on $w$ follows by combining all the previous items, i.e.\ the fact that commutators, time-evolution and inverse time-evolution map $\mathfrak{B}_F(I)$ into $\mathfrak{B}_{F'}(I)$ for certain $F'$. 
\end{proof}

On our way to prove Lemma \ref{lem:splitting2}, we first state the following observation concerning the unitary family $V_Z(s)=V(s)$ introduced in the proof of Proposition \ref{lem: stitching invertible} in the main text.
\begin{lemma}\label{lem: auto form for v}  
There is an $F$-function $F\in\caM$ and a time-dependent interaction $l \in \mathfrak{B}_F([0,1])$ such that 
\begin{equation}\label{eq: evolution equation v}
V(s) O  V^*(s)=\alpha_l(s)[O], \qquad  O \in \widetilde\caA
\end{equation}
Moreover, $l$ is anchored in $\partial Z$.
\end{lemma}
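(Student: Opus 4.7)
My plan is to realize $l$ through the composition formulas of Lemma~\ref{lem: evolved interactions}. Starting from $V(s) = \hat U(s) U(s)^{*}$, conjugation by $V(s)$ factors as
\begin{equation}
V(s) O V(s)^{*} \;=\; \hat U(s)\bigl(U(s)^{*} O U(s)\bigr)\hat U(s)^{*} \;=\; \alpha_{\hat q}(s)\circ \alpha_q(s)^{-1}[O],
\end{equation}
so the task reduces to producing a single interaction $l$ such that $\alpha_l(s)$ coincides with this composition.

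Concretely, I would first apply item~1 of Lemma~\ref{lem: evolved interactions} twice to produce inverse interactions $\bar q, \bar{\hat q} \in \mathfrak{B}_{F_1}([0,1])$ with $\alpha_{\bar q}(s) = \alpha_q(s)^{-1}$ and $\alpha_{\bar{\hat q}}(s) = \alpha_{\hat q}(s)^{-1}$, the $F$-norm bounds coming from \eqref{eq: bound evolved inter}. Applying item~3 with $z_1 = \bar{\hat q}$ and $z_2 = \bar q$ then yields $l \in \mathfrak{B}_F([0,1])$ with
\begin{equation}
\alpha_l(s) \;=\; \alpha_{\bar{\hat q}}(s)^{-1}\circ \alpha_{\bar q}(s) \;=\; \alpha_{\hat q}(s)\circ \alpha_q(s)^{-1},
\end{equation}
which is the required identity \eqref{eq: evolution equation v}. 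The $F$-norm of $l$ is controlled by composing the estimates from items~1 and~3, and measurability of $s \mapsto l_S(s)$ is inherited from that of $q(\cdot)$.

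The anchoring of $l$ in $\partial Z$ is the delicate step. The key observation is that $\alpha_l = \mathrm{id}$ whenever $q = \hat q$, so $l$ fundamentally depends only on the difference $q - \hat q$, which is non-zero only on sets $X$ that straddle $Z$, i.e.\ $X \cap Z \ne \emptyset$ and $X \cap Z^{c} \ne \emptyset$. Such an $X$ may not itself meet $\partial Z$, but on any shortest path from a site of $X \cap Z$ to a site of $X \cap Z^{c}$ there lies some $y^{\star} \in \partial Z$, and reassigning the relevant term from $X$ to $X \cup \{y^{\star}\}$, of diameter at most $2\diam(X)$ and meeting $\partial Z$, yields an equivalent formulation strictly anchored in $\partial Z$; the polynomial diameter rescaling is absorbed into the $F$-function via Lemma~\ref{lem: acro with functions}. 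An alternative construction making the anchoring more transparent starts from the interaction-picture identity $\partial_s[\hat U(s)^{*}U(s)] = i\,\hat U(s)^{*}H_{q - \hat q}(s)\hat U(s)\cdot[\hat U(s)^{*}U(s)]$: the auxiliary propagator $\hat U^{*}U$ is generated by an interaction that is, after the reindexing above, manifestly anchored in $\partial Z$, and $l$ is then obtained from it by conjugation with $\hat U$ using items~1 and~3 once more. The main obstacle I anticipate is verifying that property~2 of the bullet list preceding Lemma~\ref{lem: evolved interactions} propagates the strict anchoring through all these compositions, since the underlying time-evolved-interaction construction spreads the support of each local term through conditional expectations on fattened sets; this propagation rests ultimately on the Lieb--Robinson bound of Theorem~\ref{thm: LR} compiled in Appendix~\ref{sec: locality estimates}.
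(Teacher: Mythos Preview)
Your factorization and the appeal to item~3 of Lemma~\ref{lem: evolved interactions} are correct, but routing through the inverted interactions $\bar q,\bar{\hat q}$ loses the anchoring. With $z_1=\bar{\hat q}$ and $z_2=\bar q$ the explicit formula in item~3 is built from $z_2-z_1=\bar q-\bar{\hat q}=\alpha_{\hat q}(s)[\hat q(s)]-\alpha_q(s)[q(s)]$, both as the leading term and inside the commutator. This difference is \emph{not} anchored in $\partial Z$: each summand is a time-evolution, under a genuinely different dynamics, of an interaction supported throughout $\Gamma$, and for $S$ far from $\partial Z$ the local terms $(\bar q-\bar{\hat q})_S$ do not vanish. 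Your heuristic that ``$l$ depends only on $q-\hat q$'' is true of the automorphism $\alpha_l$ but not of the particular interaction representative that item~3 outputs, and anchoring is a property of the representative.

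The paper's route is to apply item~3 \emph{directly} with $z_1=q$ and $z_2=\hat q$, without first inverting. Then $z_2-z_1=-\tilde q$ with $\tilde q:=q-\hat q$, and the resulting interaction is
\[
l(s)=-\tilde q(s)-i\,(\alpha_{\hat q}(s))^{-1}\!\left[\int_0^s \alpha_{\hat q}(u)\Bigl[\tilde q(u),\,\alpha_q(s,u)[q(s)]\Bigr]\,du\right],
\]
in which $\tilde q$ appears explicitly. Since taking a commutator with an anchored interaction and time-evolving an anchored interaction both preserve the anchoring (the remark before Lemma~\ref{lem: evolved interactions} and property~2 of the bullet list above it), the anchoring of $l$ in $\partial Z$ follows at once---no second pass through items~1 and~3 and no extra conjugation by $\hat U$ is needed. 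Your alternative interaction-picture construction is headed in this direction, but as sketched it still requires composing with $\alpha_{\hat q}$ afterwards, which would again feed a non-anchored input into item~3. Your observation that an $X$ straddling $Z$ and $Z^c$ need not literally meet $\partial Z$, together with the reindexing fix via a nearby boundary point, is a valid technicality the paper passes over.
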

\begin{proof}
We recall the time-dependent interactions $q,\hat q$. These were defined as having a finite $||\cdot||_m$ norm, with $m\in\caM$. By Lemma \ref{lem: comparison of norms}, we find an $F$-function $F_0$ such that $q,\hat q$ have a finite $|||\cdot|||_{F_0}$-norm. 
From the definition of $q,\hat q$, we deduce that 
$$
V^*(s)OV(s)= (\alpha_{{q}}(s))^{-1}\circ \alpha_{\hat{q}}(s)[O]
$$
We set
$$
 \tilde{q}(s)=q(s)-\hat q(s)
$$
and we note that $\tilde{q}(s)$ is anchored in $\partial Z$, for any $s$.
Now we invoke item 3) of Lemma \ref{lem: evolved interactions}, with $z_1=q, z_2=\hat{q}$, to conclude that there is $l\in \mathfrak{B}([0,1])$, given by
$$  l(s)= -\tilde{q}(s)- i (\alpha_{\hat{q}}(s))^{-1}\left[\int_0^s \alpha_{\hat{q}}(u)\Big\{\big[\tilde{q}(u),\alpha_{q}(s,u)[q(s)]\big]\Big\}du \right], $$
such that $V(s) O  V^*(s)= \alpha_l(s)[O]$.
Since commutators and time-evolution preserve the property of being anchored in a set, we conclude that  $l$, just as $\tilde{q}$, is anchored in $\partial Z$.   
\end{proof}

\subsection{Proof of Lemma \ref{lem:splitting2}}
For convenience we restate Lemma \ref{lem:splitting2}:
\begin{lemma}[repetition of Lemma \ref{lem:splitting2}]\label{lem:splitting3 repeat}
For any regions $X,Z$ and $O \in \widetilde\caA_X$, with $V=V_Z$ as in Lemma \ref{lem: auto form for v} above, 
\begin{equation}\label{eq: lr repeat}
|| V^*OV -\bbE_{(X_r)^c}(V^*OV) || \leq ||O|| |X| m(r)
\end{equation}
and
\begin{equation}\label{eq: locality of vz repeat}
||V^*OV-O || \leq  ||O || |X|  m(\dist(X,\partial Z)) 
\end{equation}
The same estimates hold as well if we exchange $V$ and $V^*$.
\end{lemma}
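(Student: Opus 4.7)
The plan is to reduce both estimates to the automorphic representation supplied by Lemma \ref{lem: auto form for v}. That lemma gives a time-dependent interaction $l \in \mathfrak{B}_F([0,1])$ with $F\in\caM$, anchored in $\partial Z$, such that $V(s)OV^*(s)=\alpha_l(s)[O]$. Applying item 1 of Lemma \ref{lem: evolved interactions} to $l$ yields an interaction $\bar{l}$ in some class $\mathfrak{B}_{F'}([0,1])$, $F'\in\caM$, such that $\alpha_{\bar l}(s)=(\alpha_l(s))^{-1}$, and since anchoring is preserved by the construction of the evolved interaction, $\bar{l}$ is still anchored in $\partial Z$. Thus $V^*(1)OV(1) = \alpha_{\bar l}(1)[O]$. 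The two estimates are then locality statements about $\alpha_{\bar l}(1)[O]$.

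For the first bound \eqref{eq: lr repeat}, the plan is to combine the Lieb-Robinson bound (Theorem \ref{thm: LR}) with Lemma \ref{lem: com and norm}. For any $A\in\widetilde\caA_{(X_r)^c}$ with $\|A\|=1$, Theorem \ref{thm: LR} gives
\begin{equation*}
\|[\alpha_{\bar l}(1)[O],A]\| \leq \frac{\|O\|}{C_{F'}} e^{C_{F'}\||\bar l|\|_{F'}}  \sum_{x\in X,\, y\in (X_r)^c} F'(\dist(x,y)).
\end{equation*}
Because every $y\in(X_r)^c$ has $\dist(y,X)>r$, the inner sum at fixed $x$ is controlled by $\sum_{k\geq r} C_\Gamma k^{d-1} F'(k)$, which is bounded by $\tilde m(r)$ for some $\tilde m\in\caM$ using the finite dimension of $\Gamma$ and the quasi-exponential decay of $F'$. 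Summing over $x\in X$ produces the factor $|X|$. Lemma \ref{lem: com and norm} then converts the uniform commutator bound into the desired bound on $\|V^*OV-\bbE_{(X_r)^c}(V^*OV)\|$.

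For the second bound \eqref{eq: locality of vz repeat}, the plan is to use the integral form of the automorphism together with the anchoring of $\bar{l}$. Writing $R=\dist(X,\partial Z)$ and using
\begin{equation*}
V^*OV-O = \alpha_{\bar l}(1)[O]-O = i\int_0^1 du\; \alpha_{\bar l}(u)\!\left[\,[H_{\bar l}(u),O]\,\right],
\end{equation*}
norm-preservation of $\alpha_{\bar l}(u)$ reduces the task to bounding $\|[H_{\bar l}(u),O]\|$. Only $\bar{l}_S(u)$ with $S\cap X\neq\emptyset$ can contribute, and anchoring forces $S\cap\partial Z\neq\emptyset$ as well, hence $\diam(S)\geq R$. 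The large-$\diam$ tail of $\|\bar{l}\|_{F'}<\infty$ is extracted by choosing an auxiliary $f\in\caM$ with $f\geq F'$ so that $f/F'$ still lies in $\caM$; then for each $x\in X$,
\begin{equation*}
\sum_{S\ni x,\,\diam(S)\geq R} \|\bar{l}_S(u)\| \leq f(R)\sum_{S\ni x} \frac{\|\bar{l}_S(u)\|}{(f/F')(\diam S)\,F'(\diam S)} \leq C\, f(R),
\end{equation*}
after which summation over $x\in X$ yields $\|[H_{\bar l}(u),O]\|\leq C\|O\||X|\,m(R)$ with $m\in\caM$. The final identities for $V,V^*$ exchanged follow from the same argument applied to $l$ instead of $\bar{l}$.

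The main technical hurdle is the bookkeeping for the anchoring property and the norm decay under the operations of time-evolution, commutation and inversion of interactions, all of which are collected in Lemma \ref{lem: evolved interactions}; once these properties are available the Lieb-Robinson machinery handles both estimates in a parallel way. An additional subtlety is the translation between the $\|\cdot\|_m$ and $\||\cdot|\|_F$ families of norms via Lemma \ref{lem: comparison of norms}, which is needed to stay within the class $\caM$ throughout and to absorb polynomial prefactors of the type $|X|$ and $r^d$ arising from the volume growth of $\Gamma$.
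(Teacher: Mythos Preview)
Your proposal is correct and follows the same route as the paper: represent the conjugation by $V$ (and its inverse) as a locally generated automorphism via Lemma~\ref{lem: auto form for v} and item~1 of Lemma~\ref{lem: evolved interactions}, derive \eqref{eq: lr repeat} from the Lieb--Robinson bound together with Lemma~\ref{lem: com and norm}, and obtain \eqref{eq: locality of vz repeat} from the Duhamel integral using that the generating interaction is anchored in $\partial Z$. The only cosmetic difference is in the second estimate: the paper bounds $\sum_{S:\,S\cap X\neq\emptyset,\,S\cap\partial Z\neq\emptyset}\|l_S(u)\|$ directly via the $F$-norm as $|||l|||_{F_l}\sum_{x\in X,\,y\in\partial Z}F_l(\dist(x,y))$, whereas you detour through a $\diam S\geq R$ tail bound and the $m$-norm---note that your stated requirement ``$f/F'\in\caM$'' is neither needed nor generally true; what you actually use is simply that $\|\bar l\|_f<\infty$ for some $f\in\caM$, which is precisely the content of Lemma~\ref{lem: comparison of norms}.
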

To get the bound \eqref{eq: lr repeat}, we invoke Lemma \ref{lem: auto form for v} and  Lemma \ref{lem: com and norm} to relate conditional expectations to commutators. Then, we use the Lieb-Robinson bound (Theorem \ref{thm: LR}) for the dynamics $\alpha_l$:
\begin{align*}
|| V^*OV -\bbE_{(X_r)^c}(V^*OV) || & \leq \sup_{A \in \widetilde\caA_{(X_r)^c}, ||A||\leq 1 }  || [A,\alpha_l(1)[O] ||  \\[0.5mm]
&\leq  ||O|| e^{C_{F_l}|||z_l|||_{F_l}}\left(\sum_{x\in X, y\in (X_r)^c}  F_l(\dist(x,y))\right)
\end{align*}
The last expression between brackets is estimated as $|X|m(r)$, since $F_l\in\caM$ and the spatial dimension $d$ is finite.

Let us now turn to \eqref{eq: locality of vz repeat}. By the Heisenberg equation,
$$
V^* O V - O = i \int_0^1 \alpha_{l}(u) [ H_l(u), O] d u.
$$
Therefore, we can bound
\begin{align*}
 ||V^* O V - O || & \leq  \int_0^1 d u  \sum_S  || \alpha_{l}(u)   [ l_S(u), O] ||  \\
& =    \int_0^1 du   \sum_{S}  ||[l_S(u),O] ||   \\
& \leq  2|| O || \sup_{u\in [0,1]} \sum_{S: S\cap \partial Z\neq\emptyset, S \cap X \neq\emptyset}  ||l_S(u)   || \\
& \leq  2|| O || \sup_{u\in [0,1]}  \sum_{x\in X, y \in \partial Z }  \sum_{S: S \supset \{x,y \}}  ||l_S(u)   || \\
& \leq 2 || O || \sup_{u\in [0,1]}  |||l(u)|||_{F_l}  \sum_{x\in X, y \in \partial Z }  F_l(\dist(x,y))   \\
& = 2 || O || |||l|||_{F_l}  \sum_{x\in X, y \in \partial Z }  F_l(\dist(x,y)) 
\end{align*}
The third line follows because $O$ is supported in $X$ and $l$ is anchored in $\partial Z$.  The  expression in the last line is dominated by $||O|| |X| m(\dist(X,\partial Z))$ upon using that $F_l\in\caM$, that the spatial dimension $d$ is finite,  and bounding  $|||l|||_{F_l}$ by a constant.

To get the bounds  \eqref{eq: lr repeat} and \eqref{eq: locality of vz repeat} for $VOV^*$ instead of $V^*OV$, we note that 
$$
VOV^*=(\alpha_l(1))^{-1}[O] =  \alpha_{\bar{l}}(1)[O] 
$$
with $\bar{l}$ related to $l$ as in Lemma \ref{lem: evolved interactions}. By Lemma \ref{lem: evolved interactions}, $\bar{l}\in\mathfrak{B}_{F_{\bar{l}}}([0,1])$ for some $F$-function $F_{\bar{l}}\in\caM$. We can therefore copy the arguments given above for  $V^*OV=\alpha_{l}(1)[O]$, replacing $l$ by $\bar{l}$. 

\bibliographystyle{quantum} 

\bibliography{impurity}

\begin{thebibliography}{10}

\bibitem{de2017exponentially}
Wojciech De~Roeck and Marius Sch{\"u}tz.
\newblock ``An exponentially local spectral flow for possibly non-self-adjoint
  perturbations of non-interacting quantum spins, inspired by kam theory''.
\newblock \href{https://dx.doi.org/10.1007/s11005-016-0913-z}{Letters in
  Mathematical Physics {\bf 107}, 505--532}~(2017).

\bibitem{del2020lie}
Simone Del~Vecchio, J{\"u}rg Fr{\"o}hlich, Alessandro Pizzo, and Stefano Rossi.
\newblock ``Lie-schwinger block-diagonalization and gapped quantum chains:
  analyticity of the ground-state energy''.
\newblock \href{https://dx.doi.org/10.1016/j.jfa.2020.108703}{Journal of
  Functional Analysis {\bf 279}, 108703}~(2020).

\bibitem{froehlich2020lie}
Juerg Froehlich and Alessandro Pizzo.
\newblock ``Lie--schwinger block-diagonalization and gapped quantum chains''.
\newblock \href{https://dx.doi.org/10.1007/s00220-019-03613-2}{Communications
  in Mathematical Physics {\bf 375}, 2039--2069}~(2020).

\bibitem{yarotsky2006ground}
DA~Yarotsky.
\newblock ``Ground states in relatively bounded quantum perturbations of
  classical lattice systems''.
\newblock \href{https://dx.doi.org/10.1007/s00220-005-1456-9}{Communications in
  mathematical physics {\bf 261}, 799--819}~(2006).

\bibitem{datta1996low}
Nilanjana Datta, Roberto Fern{\'a}ndez, and J{\"u}rg Fr{\"o}hlich.
\newblock ``Low-temperature phase diagrams of quantum lattice systems. i.
  stability for quantum perturbations of classical systems with finitely-many
  ground states''.
\newblock \href{https://dx.doi.org/10.1007/BF02179651}{Journal of statistical
  physics {\bf 84}, 455--534}~(1996).

\bibitem{borgs1996low}
Christian Borgs, R~Koteck{\`y}, and D~Ueltschi.
\newblock ``Low temperature phase diagrams for quantum perturbations of
  classical spin systems''.
\newblock \href{https://dx.doi.org/10.1007/BF02101010}{Communications in
  mathematical physics {\bf 181}, 409--446}~(1996).

\bibitem{lapa2021stability}
Matthew~F Lapa and Michael Levin.
\newblock ``Stability of ground state degeneracy to long-range
  interactions''~(2021).
\newblock  \href{http://arxiv.org/abs/2107.1139}{arXiv:2107.11396}.

\bibitem{bravyi2010topological}
Sergey Bravyi, Matthew~B Hastings, and Spyridon Michalakis.
\newblock ``Topological quantum order: stability under local perturbations''.
\newblock \href{https://dx.doi.org/10.1063/1.3490195}{Journal of mathematical
  physics {\bf 51}, 093512}~(2010).

\bibitem{michalakis2013stability}
Spyridon Michalakis and Justyna~P Zwolak.
\newblock ``Stability of frustration-free hamiltonians''.
\newblock \href{https://dx.doi.org/10.1007/s00220-013-1762-6}{Communications in
  Mathematical Physics {\bf 322}, 277--302}~(2013).

\bibitem{nachtergaele2020quasi}
Bruno Nachtergaele, Robert Sims, and Amanda Young.
\newblock ``Quasi-locality bounds for quantum lattice systems. part ii.
  perturbations of frustration-free spin models with gapped ground states''.
\newblock In Annales Henri Poincar{\'e}.
\newblock \href{https://dx.doi.org/10.1007/s00023-021-01086-5}{Volume~23, pages
  393--511}.
\newblock Springer~(2022).

\bibitem{nachtergaele2021stability}
Bruno Nachtergaele, Robert Sims, and Amanda Young.
\newblock ``Stability of the bulk gap for frustration-free topologically
  ordered quantum lattice systems''~(2021).
\newblock  \href{http://arxiv.org/abs/2102.0720}{arXiv:2102.07209}.

\bibitem{bachmann2012automorphic}
Sven Bachmann, Spyridon Michalakis, Bruno Nachtergaele, and Robert Sims.
\newblock ``Automorphic equivalence within gapped phases of quantum lattice
  systems''.
\newblock \href{https://dx.doi.org/10.1007/s00220-011-1380-0}{Communications in
  Mathematical Physics {\bf 309}, 835--871}~(2012).

\bibitem{de2015local}
Wojciech De~Roeck and Marius Sch{\"u}tz.
\newblock ``Local perturbations perturb—exponentially--locally''.
\newblock \href{https://dx.doi.org/10.1063/1.4922507}{Journal of Mathematical
  Physics {\bf 56}, 061901}~(2015).

\bibitem{kitaev2006anyons}
Alexei Kitaev.
\newblock ``Anyons in an exactly solved model and beyond''.
\newblock \href{https://dx.doi.org/10.1016/j.aop.2005.10.005}{Annals of Physics
  {\bf 321}, 2--111}~(2006).

\bibitem{kitaev2009topological}
Alexei Kitaev and Chris Laumann.
\newblock ``Topological phases and quantum computation''.
\newblock Exact methods in low-dimensional statistical physics and quantum
  computing, Lecture Notes of the Les Houches Summer SchoolPages
  101--125~(2009).
\newblock
  url:~\href{https://arxiv.org/pdf/0904.2771.pdf}{arxiv.org/pdf/0904.2771.pdf}.

\bibitem{nachtergaele2020dispersive}
Bruno Nachtergaele and Nicholas~E Sherman.
\newblock ``Dispersive toric code model with fusion and defusion''.
\newblock \href{https://dx.doi.org/10.1103/PhysRevB.101.115105}{Physical Review
  B {\bf 101}, 115105}~(2020).

\bibitem{henheik2021local}
Joscha Henheik, Stefan Teufel, and Tom Wessel.
\newblock ``Local stability of ground states in locally gapped and weakly
  interacting quantum spin systems''.
\newblock \href{https://dx.doi.org/10.1007/s11005-021-01494-y}{Letters in
  Mathematical Physics {\bf 112}, 1--12}~(2022).

\bibitem{hastings2007quantum}
Matthew~B Hastings.
\newblock ``Quantum belief propagation: An algorithm for thermal quantum
  systems''.
\newblock \href{https://dx.doi.org/10.1103/PhysRevB.76.201102}{Physical Review
  B {\bf 76}, 201102}~(2007).

\bibitem{kato2019quantum}
Kohtaro Kato and Fernando~GSL Brandao.
\newblock ``Quantum approximate markov chains are thermal''.
\newblock \href{https://dx.doi.org/10.1007/s00220-019-03485-6}{Communications
  in Mathematical Physics {\bf 370}, 117--149}~(2019).

\bibitem{hastings2005quasiadiabatic}
Matthew~B Hastings and Xiao-Gang Wen.
\newblock ``Quasiadiabatic continuation of quantum states: The stability of
  topological ground-state degeneracy and emergent gauge invariance''.
\newblock \href{https://dx.doi.org/10.1103/PhysRevB.72.045141}{Physical review
  b {\bf 72}, 045141}~(2005).

\bibitem{freed2014anomalies}
Daniel~S Freed.
\newblock ``Anomalies and invertible field theories''.
\newblock In Proc. Symp. Pure Math.
\newblock Volume~88, pages 25--46.
\newblock ~(2014).
\newblock
  url:~\href{https://arxiv.org/pdf/1404.7224.pdf}{arxiv.org/pdf/1404.7224.pdf}.

\bibitem{KitaevConf}
A.~Kitaev.
\newblock ``On the classification of short-range entangled states''.
\newblock \url{http://scgp.stonybrook.edu/video_portal/video.php?id=2010}.

\bibitem{gu2009tensor}
Zheng-Cheng Gu and Xiao-Gang Wen.
\newblock ``Tensor-entanglement-filtering renormalization approach and
  symmetry-protected topological order''.
\newblock \href{https://dx.doi.org/10.1103/PhysRevB.80.155131}{Physical Review
  B {\bf 80}, 155131}~(2009).

\bibitem{kapustin2020hall}
Anton Kapustin and Nikita Sopenko.
\newblock ``Hall conductance and the statistics of flux insertions in gapped
  interacting lattice systems''.
\newblock \href{https://dx.doi.org/10.1063/5.0022944}{Journal of Mathematical
  Physics {\bf 61}, 101901}~(2020).

\bibitem{Lieb:1972ts}
E.H. Lieb and D.W. Robinson.
\newblock ``{The finite group velocity of quantum spin systems}''.
\newblock \href{https://dx.doi.org/10.1007/978-3-662-10018-9_25}{Commun. Math.
  Phys. {\bf 28}, 251--257}~(1972).

\bibitem{nachtergaele2019quasi}
Bruno Nachtergaele, Robert Sims, and Amanda Young.
\newblock ``Quasi-locality bounds for quantum lattice systems. i. lieb-robinson
  bounds, quasi-local maps, and spectral flow automorphisms''.
\newblock \href{https://dx.doi.org/10.1063/1.5095769}{Journal of Mathematical
  Physics {\bf 60}, 061101}~(2019).

\bibitem{bruckner1960minimal}
A.~Bruckner.
\newblock ``Minimal superadditive extensions of superadditive functions''.
\newblock Pacific J. Math. {\bf 10}, 1155--1162~(1960).
\newblock
  url:~\href{https://msp.org/pjm/1960/10-4/pjm-v10-n4-s.pdf\#page=51}{msp.org/pjm/1960/10-4/pjm-v10-n4-s.pdf\#page=51}.

\end{thebibliography}

\end{document}